\documentclass[11pt]{article}
\usepackage[a4paper,margin=1in]{geometry}
\usepackage{amsmath,amssymb,amsthm,mathtools}
\usepackage{microtype}
\usepackage[colorlinks=true,linkcolor=blue,citecolor=blue,urlcolor=blue]{hyperref}
\usepackage{xcolor}
\usepackage{graphicx}
\usepackage{adjustbox}
\usepackage[shortlabels]{enumitem}
\usepackage{tikz}
\usepackage{subcaption}
\usepackage[percent]{overpic}
\usepackage[normalem]{ulem}
\usepackage{comment}
\usepackage{subcaption}

\usepackage[
    backend=biber,
    citestyle=numeric-comp,
    sorting=none,
    doi=true,
    eprint=true,
    url=false,
    isbn=false,
    maxbibnames=7,
    uniquename=init,
    giveninits=true
]{biblatex}
\usepackage{hyperref}

\AtEveryBibitem{
    \clearfield{month}
    \clearfield{date}
    \clearfield{day} 
    \clearfield{primaryClass}
}

\usetikzlibrary{calc,arrows.meta,angles,quotes,patterns}

\newcommand{\ConeShellFigure}[3][\linewidth]{%
  \begin{overpic}[width=#1]{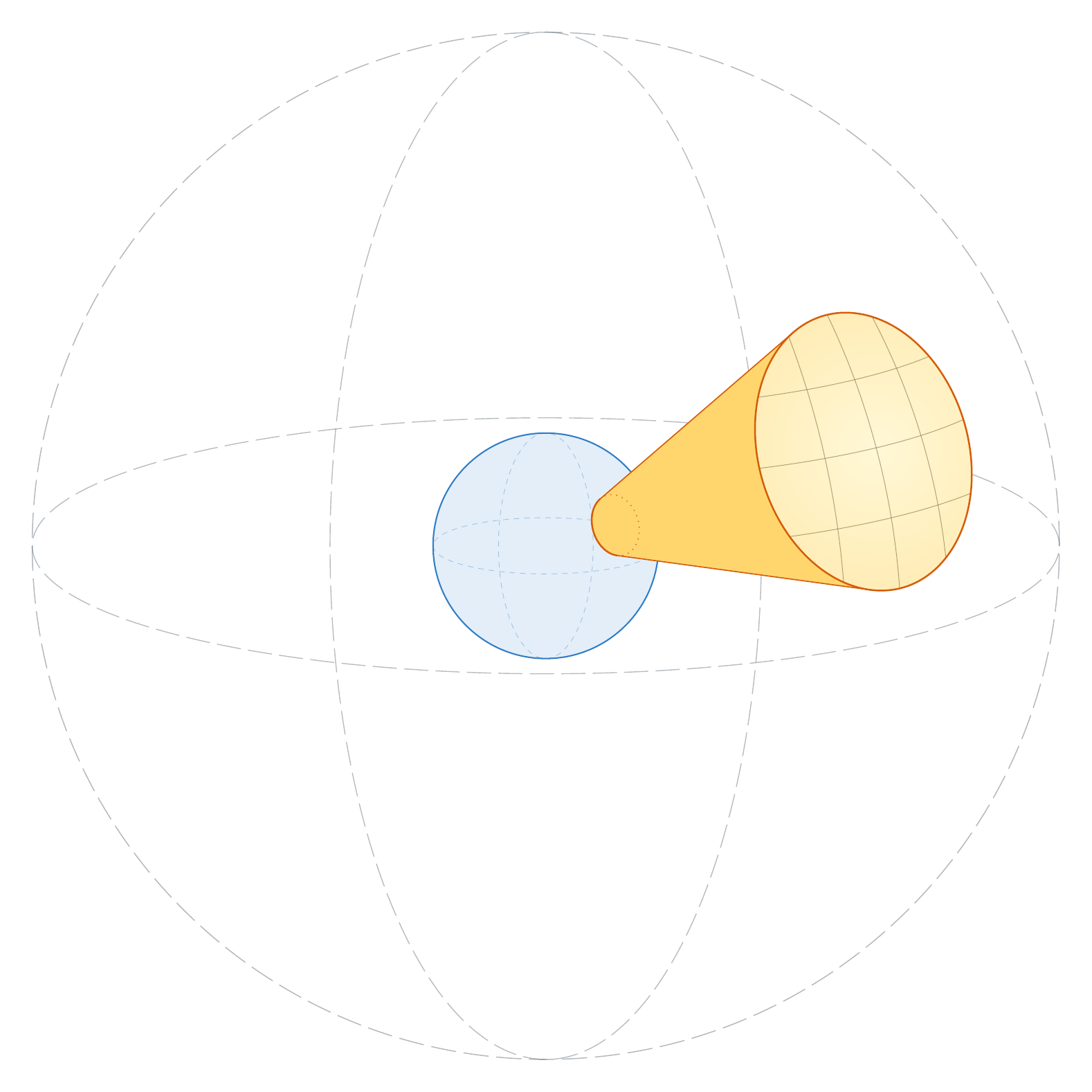}%
    \put(48.0,50){\makebox(0,0){#2}}%
    \put(65.6,55.6){\makebox(0,0){#3}}%
  \end{overpic}%
}

\definecolor{darkblue}{rgb}{0.,0.,0.4}
\definecolor{darkred}{rgb}{0.5,0.,0.}
\definecolor{BlueViolet}{RGB}{138,43,226}
\definecolor{SkyBlue}{RGB}{30,144,255}
\definecolor{DarkGreen}{RGB}{0,100,0}
\definecolor{exteriorcolor}{RGB}{210,210,210}
\definecolor{conepale}{RGB}{245,225,150}
\definecolor{sphereblue}{RGB}{70,130,190}
\definecolor{wiregray}{RGB}{120,120,120}
\definecolor{coneedge}{RGB}{180,140,40}

\newtheorem{theorem}{Theorem}[section]

\newtheorem{lemma}[theorem]{Lemma}

\newtheorem{introtheorem}{Theorem} %letter-labeled thm env for intro

\theoremstyle{remark}

\newtheorem*{remark*}{Remark}

\newcommand{\A}{\mathcal A}

\newcommand{\F}{\mathcal F}

\renewcommand{\H}{\mathcal H}

\newcommand{\supp}{\operatorname{supp}}

\newcommand{\id}{\mathrm{id}}

\newcommand{\ox}{\otimes}
\newcommand{\eps}{\varepsilon}
\newcommand{\ZZ}{\mathbb Z}

\newcommand{\RR}{\mathbb R}
\newcommand{\CC}{\mathbb C}
\renewcommand{\Hat}[1]{\widehat{#1}}
\newcommand{\ip}[2]{\langle #1, #2 \rangle}
\newcommand{\placeholder}{\,\cdot\,}

\DeclareMathOperator{\tr}{Tr}

\newcommand\oo\infty

\title{An entropic characterization of Haag duality }
\author{Ruizhi Liu$^{1,2}$ \and Lauritz van Luijk$^{1,3}$} 
\date{{\small
$^1$Perimeter Institute for Theoretical Physics\\
$^2$Department of Mathematics and Statistics, Dalhousie University, Nova Scotia, Canada\\
$^3$Institute for Quantum Computing, University of Waterloo, Ontario, Canada}  \\[11pt]
\today}

\begin{document}

\maketitle

\begin{abstract}
Motivated by Haag duality in quantum spin systems, we show that Haag duality for a pair of commuting factors generated by increasing sequences of finite-dimensional matrix algebras is equivalent to the asymptotic vanishing of a suitable conditional mutual information.

For spin systems, let an annulus separate a finite region from the exterior. Haag duality then holds for a region $A$ if and only if the mutual information of the inner region and the exterior, conditioned on the part of $A$ inside the annulus, vanishes as the outer radius becomes large. 
As a corollary, assuming a strict area law with subleading corrections, Haag duality holds for single cones, and it holds for unions of two or more disjoint cones precisely when the topological entanglement entropy vanishes. 
For translation-invariant pure states on spin chains, half-chain Haag duality is equivalent to vanishing entropy density.
\end{abstract}

\section{Introduction}

Haag duality is an important algebraic property characterizing bipartitions of quantum systems with infinitely many degrees of freedom in a fixed sector \cite{haag2012local}.
It asserts that the von Neumann algebras $A,B$ modeling the two subsystems in the sector $\H$ under consideration are commutants of each other
\begin{equation}
    A = B', \qquad X' = \{ y\in B(\H) \,:\, xy=yx\ \forall x\in X\}.
\end{equation}
Haag duality appeared first in the study of algebraic quantum field theory \cite{haag2012local} and has been studied extensively since.
To this day, it is much studied in high-energy physics \cite{jensen_generalized_2023,shao_additivity_2025,harlow_disjoint_2025,faulkner_asymptotically_2022,holfester_algebraic_2026,gui_minkowskian_2026,galanda_haag_2026,naaijkens_local_2026}, quantum matter \cite{keyl_entanglement_2006,naaijkens_haag_2012,fiedler_haag_2015,Naaijkens_2017,jones_local_2025,vanLuijk2025,vanluijk2025largescalestructureentanglementquantum,ogata_haag_2025,wallick_nonabelian_2026,Kato-upcoming}, and quantum information theory of systems with infinitely many degrees of freedom \cite{summers_vacuum_1985,summers_maximal_1987,van_luijk_schmidt_2024,vanLuijk2024Embezzlement,van_luijk_entanglement_2025,vanLuijk2025Pure,van_luijk_uniqueness_2026,van_luijk_quantum_2026}.

Perhaps the most important application of Haag duality and its variants is that it gives rise to superselection sector theory, both in quantum field theory \cite{fredenhagen_superselection_1989,DoplicherHaagRoberts1971,DoplicherHaagRoberts1974} and in quantum matter \cite{naaijkens_haag_2012,naaijkens_localized_2011,Ogata_2022_tensorcat,ogata_classification-chains_2022,bhardwaj_superselection_2025,Kato-upcoming}.
In the former, the categories constructed from sector theory classify particle statistics, localized charges and their fusions; in the latter these categories classify topological excitations, together with their fusion and braiding relations.

Haag duality was recently shown to have a quantum-information theoretic interpretation \cite{van_luijk_uniqueness_2026}: It is equivalent to the fundamental principle that two global pure states can be connected by unitaries from a subsystem if and only if they have the same marginal states on the complementary subsystem.
In follow-up work \cite{van_luijk_quantum_2026}, another information-theoretic characterization was given: Both parties can steer each ensemble of the marginal state of the respective other party.

Another broader role that Haag duality plays in the literature is that of a simplifying mathematical assumption because it allows one to use the powerful toolkit of mathematical results connecting a von Neumann algebra and its commutant in a direct way.
This toolkit includes techniques such as Radon-Nikodym theorems \cite{takesaki1,belavkin_radon-nikodym_1986,vaes_radon-nikodym_2001}, standard representations \cite{haagerup_standard_1975}, Uhlmann's theorem \cite{uhlmann_transition_1976,alberti_note_1983}, Tomita-Takesaki theory \cite{takesaki2,Summers2005tomita}, spatial derivatives \cite{connes_spatial_1980}, and many more.
For instance, assuming Haag duality, it was shown that the classification of von Neumann algebras into types and subtypes is equivalent to a list of operational entanglement properties \cite{vanLuijk2025Pure,vanLuijk2024Embezzlement,vanLuijk2024PRL,van_luijk_entanglement_2025}.

A common theme in the literature is that Haag duality is often \emph{assumed}, but only rarely proved.
A notable exception are conformal field theories on the circle, where positive-energy and Möbius covariance imply Haag duality for intervals \cite{gabbiani_operator_1993}.
In most cases where Haag duality has successfully been shown, the models are exactly solvable and the proofs are hard to generalize beyond the studied class of models \cite{ogata_haag_2025,matsui2011boundedness,naaijkens_haag_2012,fiedler_haag_2015,vanLuijk2025,naaijkens_local_2026}.
It is believed that half-chain Haag duality holds for general translation-invariant pure states on quantum spin chains.%
\footnote{\label{footnote:wrong-hd-proofs}
Two alleged proofs have been published \cite{KEYL_2008,mohari_translation_2014}.
The proof in \cite{KEYL_2008} has a mistake, as reported by two of the authors in \cite{Matsui2010,vanLuijk2025Pure}.
The paper \cite{mohari_translation_2014} sets out to fix this mistake, but several steps in its argument remain unclear.
We, therefore, regard the conjecture as open.
}
The recent paper \cite{wallick_nonabelian_2026} constructs the first example of a unique ground state of a local Hamiltonian in a 2D spin system for which Haag duality is false for cone-like regions.

\begin{figure}[t!]
\centering
\clipbox{0.2cm 0.2cm 0.2cm 0.2cm}{% left bottom right top
    \ConeShellFigure[0.62\linewidth]
      {\small $D_r$}
      {\small $A_{r,R}$}%
      % {\small $D_R^c$}
  }
  \caption{
    Geometry in three dimensions for a cone $A$:
    We are interested in the mutual information between a ball $D_r$ of radius $r$ (blue) and the exterior of a ball of radius $R> r$, conditioned on $A_{r,R}$ (yellow), the part of the cone in the spherical shell of inner radius $r$ and outer radius $R$.
    Haag duality holds if and only if the mutual information is negligible in the regime $R\gg r$.
    }
  \label{fig:3D-case}
\end{figure}

In this paper, we show that Haag duality in quantum spin systems can be seen as an asymptotic quantum Markovianity property.
To explain the result in its general form, we consider an abstract algebraic setting.
We consider a pair of commuting von Neumann factors $A,B$ on a Hilbert space $\H$, 
and we assume that $A$ and $B$ are generated by increasing families of finite matrix algebras $(A_n)_n$ and $(B_n)_n$ and that $A\vee B = B(\H)$. 
In a fixed sector of a quantum spin system, $A$ and $B$ are the von Neumann algebras associated with a region and its complement, and $A_n$, $B_n$ are approximations on finite regions.
For $n\ge m$, we denote the relative commutants of $A_m\subset A_n$ and $B_m\subset B_n$ by
\begin{equation}
    A_{m,n} := A_m'\cap A_n ,\qquad B_{m,n} := B_m'\cap B_n.
\end{equation}
We introduce the notation
\begin{equation}
    D_n := A_n\vee B_n.
\end{equation}
It follows that $A_{m,n}$, $B_{m,n}$, $D_n'$ and $D_m$ are pairwise commuting type I factors, which generate $B(\H)$, see Sec.~\ref{sec:proof-abstract} for a proof.
Relative to a global pure state $\Omega\in\H$, we can define the conditional mutual information as%
\footnote{Relative to a mixed state $\omega$, the conditional mutual information is normally defined as $S(D_m \vee A_{m,n})_\omega + S(D_n'\vee A_{m,n})_\omega - S(A_{m,n})_\omega - S(D_m \vee A_{m,n} \vee D_n')_\omega$. 
The two definitions agree relative to a pure state since $B(\H)= A_{m,n}\vee B_{m,n}\vee D_m \vee D_n'$ implies $S(B_{m,n})_\Omega=S(D_m \vee A_{m,n} \vee D_n')_\Omega$ and $S(D_n'\vee A_{m,n})_\Omega=S(D_m \vee B_{m,n})_\Omega$.
We use eq.~\eqref{eq:CMI-entropy-rewrite} because it only involves finite-dimensional algebras.}
\begin{align}\label{eq:CMI-entropy-rewrite}
    I(D_m : D_n' \mid A_{m,n})_\Omega 
    := S(D_m \vee A_{m,n})_\Omega + S(D_m\vee B_{m,n})_\Omega - S(A_{m,n})_\Omega - S(B_{m,n})_\Omega,
\end{align}
where $S(X)_\Omega$ is the von Neumann entropy of the reduced state on $X$.
We can now state our abstract entropic theorem, roughly saying that Haag duality is equivalent to $D_m\mid A_{m,n} \mid D_n'$ being an approximate quantum Markov chain in the regime $n\gg m$:

\begin{introtheorem}\label{introthm:abstract}
    The following are equivalent:
    \begin{enumerate}[(a)]
        \item Haag duality $A=B'$;
        \item For every unit vector $\Omega\in \H$ and every $m$, 
            \begin{equation}\label{eq:CMI-outer-limit-only}
                \lim_{n\to\oo} \ I( D_m : D_n' \mid A_{m,n})_\Omega = 0.
            \end{equation}
        \item For some unit vector $\Omega\in\H$, the left-hand side of \eqref{eq:CMI-outer-limit-only} goes to zero as $m\to\oo$.
    \end{enumerate}
\end{introtheorem}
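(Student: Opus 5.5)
Everything reduces to finite-dimensional linear algebra through the factorization $\H \cong \H_{D_m}\ox \H_{A_{m,n}}\ox \H_{B_{m,n}}\ox \H_{D_n'}$ coming from the four pairwise commuting type I factors $D_m, A_{m,n}, B_{m,n}, D_n'$ that generate $B(\H)$; the first three are finite-dimensional. For a pure state $\Omega$, the quantity $I(D_m:D_n'\mid A_{m,n})_\Omega$ is an ordinary conditional mutual information of a tripartite (mixed) state on $D_m\ox A_{m,n}\ox D_n'$, purified by $B_{m,n}$. The structural fact linking this to Haag duality is a dichotomy. On one side, $A=B'$ holds iff $B' \cap D_n'$ becomes "asymptotically $A$-like" in the sense that $D_m$'s correlations with $D_n'$ are eventually mediated only through $A_{m,n}$. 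On the other, $B'\cap A'$ is exactly the part of $D_n'$ (for all $n$) that commutes with $B$. Note that $A'\cap B'=\CC$ fails exactly when Haag duality fails, since $A\subset B'$ and $A\vee B=B(\H)$ force $B'=A\vee(A'\cap B')$ only in good cases. So I would first prove the algebraic lemma: $A=B'$ iff the decreasing sequence of algebras $E_n := B'\cap D_n'$ (equivalently the relative commutants of $D_n$ inside $B'$) satisfies $\bigcap_n E_n$-type triviality in the sense that $\overline{\bigcup_n A_n \cap \ldots}$. Concretely, $B' = \overline{\bigcup_n (B_n'\cap\ldots)}$ and $B'\cap D_m' $ decomposes as $A_{m,n}\ox(B'\cap D_n')$ asymptotically.

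For (a)$\Rightarrow$(b) I would use the Petz recovery characterization. If $A=B'$, then $B'\cap D_m'=A_m'\cap A$, which is the weak closure of $\bigcup_n A_{m,n}$. Reduce to the case where $D_m$ is viewed as the reference purified by $D_m'$. The CMI equals $S(D_m\vee A_{m,n})+S(D_m\vee B_{m,n})-S(A_{m,n})-S(B_{m,n})$. By purity it can be rewritten as $I(D_m : D_m'\ominus\ldots)$, i.e.\ as a difference of mutual informations, $I(D_m:D_m')-I(D_m:A_{m,n})-I(D_m:B_{m,n})$ up to the chain rule. Since $D_m'= (A_m'\cap A)\vee(B_m'\cap B)$ under Haag duality, which needs both $A=B'$ and its consequence $B=A'$, martingale convergence of mutual information (lower semicontinuity plus monotonicity of the relative entropy $D(\omega_{D_m\vee X}\|\omega_{D_m}\ox\omega_X)$ along increasing $X$, which is finite since $D_m$ is finite-dimensional) gives $I(D_m:A_{m,n}\vee B_{m,n})\to I(D_m:D_m')=2S(D_m)$. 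What remains is to show the splitting $I(D_m:A_{m,n}\vee B_{m,n})\approx I(D_m:A_{m,n})+I(D_m:B_{m,n})$, and that is exactly what the CMI measures. Because the formula \eqref{eq:CMI-entropy-rewrite} is symmetric and finite-dimensional, I expect the identity CMI $=2S(D_m)-I(D_m:A_{m,n})-I(D_m:B_{m,n})$ to hold, and then convergence of each term to its limit on $A_m'\cap A$ and $B_m'\cap B$. Haag duality then makes these limits add to $2S(D_m)$, since $A_m'\cap A$ and $B_m'\cap B$ are mutual commutants within $D_m'$ and a pure state gives $S(D_m)$ to each.

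For (b)$\Rightarrow$(c) the argument is trivial. For (c)$\Rightarrow$(a), the limits $\lim_n I(D_m:A_{m,n})=I(D_m:A_m'\cap A)$ and similarly for $B$ exist by monotonicity. So (c) says $I(D_m:A_m'\cap A)+I(D_m:B_m'\cap B)\to 2S(D_m)$ as $m\to\infty$. If $A\subsetneq B'$, then $C:=B'\cap A'$ is a nontrivial von Neumann algebra and $D_m'\supset (A_m'\cap A)\vee(B_m'\cap B)\vee C$. I would then show a uniform entropy deficit, the main obstacle: correlations of $D_m$ with $C$ that are not captured. A cleaner route is a recovery argument. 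Small CMI gives, via Fawzi–Renner, a channel on $D_m$ approximately reconstructing $\Omega$ from its restriction to $D_m\vee A_{m,n}$. From this I would build an approximate conditional expectation onto $A$ from $B'$ and show that any $x\in B'$ is approximated in the strong topology by elements of $\bigcup A_n$, using that $\Omega$ is cyclic-separating-like for the factor structure (any vector state is fine for type I factors). Controlling the approximation uniformly as $m\to\infty$ is where I expect the real difficulty.
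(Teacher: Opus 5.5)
The decisive gap is in (c)$\Rightarrow$(a), and the same missing ingredient leaves (a)$\Rightarrow$(b) unproved. Your main route for (c)$\Rightarrow$(a) rests on the claim that $C:=B'\cap A'$ is nontrivial when Haag duality fails. But $A'\cap B'=(A\vee B)'=\CC$ holds unconditionally under the standing assumption $A\vee B=B(\H)$. Likewise $A_{m,\oo}\vee B_{m,\oo}=D_m'$ always, since together with the type I factor $D_m$ they generate $B(\H)$. So this identity does not ``need'' Haag duality, and there is no leftover algebra carrying uncaptured correlations. What actually goes wrong when $A\subsetneq B'$ is that $A_{m,\oo}$ is strictly smaller than the commutant of $B_{m,\oo}$ inside $D_m'$, which is $\Hat A_{m,\oo}=A_m'\cap B'$.

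The missing idea is the duality formula $H(D_m\mid X)+H(D_m\mid X'\cap D_m')=0$, equivalently $I(D_m:X)+I(D_m:X'\cap D_m')=2S(D_m)$, for an \emph{arbitrary}, possibly type III, von Neumann algebra $X\subset D_m'$. The paper obtains it (Lemma~\ref{lem:duality}) from Connes' spatial-derivative duality via Hollands' formula. With $X=B_{m,\oo}$ it converts $\lim_n I(D_m:D_n'\mid A_{m,n})_\Omega$ into the data-processing defect $I(D_m:\Hat A_{m,\oo})_\Omega-I(D_m:A_{m,\oo})_\Omega$ of the inclusion $A_{m,\oo}\subset\Hat A_{m,\oo}$. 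Under Haag duality, where $\Hat A_{m,\oo}=A_{m,\oo}$, it gives (a)$\Rightarrow$(b) at once.

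Your (a)$\Rightarrow$(b) uses exactly this formula without proof. The identity $\mathrm{CMI}=2S(D_m)-I(D_m:A_{m,n})-I(D_m:B_{m,n})$ and the passage to $n=\oo$ are correct; this is Lemma~\ref{lem:CMI-limit}. However, ``a pure state gives $S(D_m)$ to each'' is false if it means each mutual information equals $S(D_m)$; only the sum is $2S(D_m)$. The finite-dimensional purity argument $S(D_m\vee X)=S(X'\cap D_m')$ is also unavailable, because these entropies are undefined for type III $X$. Your detour via $I(D_m:A_{m,n}\vee B_{m,n})\to 2S(D_m)$ holds with or without Haag duality, so the remaining ``splitting'' is just a restatement of the claim.

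Your fallback recovery idea is the right instinct and can be completed. As written, though, you stop exactly at the step that carries the proof, and you put the channel on the wrong system. Small $I(D_m:D_n'\mid A_{m,n})_\Omega$ gives (Fawzi--Renner, or the universal recovery map of Junge et al.) a channel $\mathcal R$ from $A_{m,n}$ to $A_{m,n}\ox D_n'$, acting on the conditioning system, not on $D_m$. It satisfies that $\|\rho-(\id_{D_m}\ox\mathcal R)(\rho_{D_mA_{m,n}})\|_1$ is small, where $\rho$ is the marginal on $D_m\vee A_{m,n}\vee D_n'$. This marginal has finite rank, being purified by $B_{m,n}$, so finite-dimensional results apply.

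The algebraic link you never state is $B'\subset B_n'=A_n\vee D_n'$. Hence for $x\in B'$ you get $x_m:=(\id_{A_m}\ox\mathcal R^*)(x)\in A_m\vee A_{m,n}=A_n\subset A$ with $\|x_m\|\le\|x\|$. No uniform control is needed. For $a,b\in D_k$ with $k\le m$, bimodularity over $D_m$ yields $|\langle a\Omega,(x-x_m)b\Omega\rangle|\le\|\rho-(\id\ox\mathcal R)(\rho_{D_mA_{m,n}})\|_1\,\|a\|\,\|b\|\,\|x\|$. Choose $n=n(m)$ with the CMI tending to zero. Since $\bigcup_k D_k\Omega$ is dense (because $\bigvee_k D_k=A\vee B=B(\H)$), you get $x_m\to x$ weakly, hence $x\in A$.

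The paper performs the same matrix-element estimate at $n=\oo$. There it needs the von Neumann-algebraic recovery theorem of Faulkner--Hollands plus a reduction to faithful states. Working at finite $n$ along a diagonal sequence avoids that machinery for this direction, but (a)$\Rightarrow$(b) still requires the duality formula.
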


We now sketch the case of quantum many-body systems.
Let $A$ be a region in a lattice in arbitrary spatial dimension and let $B$ be its complement.
For the finite approximations of $A$ and $B$, we consider their intersections $A_r:=A\cap D_r$ and $B_r=B\cap D_r$ with a concentric family $\{D_r\}_{r>0}$ of balls with $r$ being the radius. 
% For a larger radius $R>r$, let $A_{r,R}$ be the part of $A$ in the spherical shell of inner radius $r$ and outer radius $R$.
Our theorem now says that Haag duality holds for $A$ in the sector of a pure state $\Omega$ if and only if conditioning on the part of $A$ in a shell of inner radius $r$ and outer radius $R$ removes almost all correlations between the inner radius-$r$ ball and the exterior of the outer radius-$R$ ball, in the regime $R\gg r$, see Fig.~\ref{fig:3D-case}.
In Section~\ref{sec:many-body}, we discuss applications of Theorem~\ref{introthm:abstract} to quantum many-body systems in detail.

Our result shows that deciding Haag duality is as hard as computing the limit of a certain combination of entanglement entropies of finite regions.
Indeed, the conditional mutual information that quantifies approximate Markovianity is simply a sum of four finite entanglement entropies.
Thereby, our result gives a model-independent method for checking Haag duality.
As a consequence, we find that Haag duality holds for all cones in models satisfying a strict area law with subleading corrections, and show that the validity of Haag duality for disjoint unions of cones is connected to the topological entanglement entropy.
In one spatial dimension, we find that half-chain Haag duality holds for translation-invariant models whenever the entropy density is zero.

The argument establishing Theorem~\ref{introthm:abstract} is based on a duality formula for the conditional entropy, connecting an algebra and its commutant, and approximate Petz recovery.
We express the double limit of the conditional mutual information as a sum of conditional entropies.
If Haag duality holds, the duality formula asserts that these cancel.
For the converse, we rewrite this sum as the defect in the data-processing inequality of a suitable mutual information.
The approximate vanishing of this defect implies approximate Petz recovery that allow us to approximate operators in $B'$ with operators in $A$, establishing $B'\subset A$ and, hence, $B'=A$.

The paper is organized as follows.
Section \ref{sec:many-body} discusses physical applications in many-body systems.
In two-dimensional systems, we discuss Haag duality in the context of an area law with subleading corrections, the approximate entanglement bootstrap program, and non-abelian anyons.
In one-dimensional systems, we show that half-chain Haag duality is equivalent to a vanishing entropy density, and how this gives rise to an improved Lieb-Schultz-Mattis theorem.
In Section~\ref{sec:condEnt-duality}, we derive a duality formula for the conditional entropy from the literature.
The latter plays a crucial role in Section \ref{sec:proof-abstract}, where the proof of the abstract Haag duality theorem is given.
Appendix~\ref{sec:appendix-recovery} derives an approximate Petz recovery theorem without faithfulness assumptions from the literature.

\section{Applications to quantum many-body systems}\label{sec:many-body}

We consider a pure state $\omega$ on a quantum system of spins with local Hilbert space dimension $d\ge2$ supported on a lattice $\Lambda\subset \RR^k$ in $k\ge1$ spatial dimensions.
Mathematically, $\omega$ is a pure state on the quasi-local C*-algebra
\begin{equation}
    \A_\Lambda := \bigotimes_\Lambda M_d(\CC).
\end{equation}
Let $(\pi,\H,\Omega)$ be the GNS representation of $\omega$.
We denote the von Neumann algebra associated with a region $X\subset \Lambda$ by
\begin{equation}
    M_X := \pi(\A_X)''\subset B(\H),
\end{equation}
where $\A_X=\otimes_X M_d(\CC)\subset \A_\Lambda$.
Irreducibility of the GNS representation implies that, for an arbitrary region $X$, $M_X$ is a factor and that $M_X \vee M_{X^c}=B(\H)$, where $X^c:=\Lambda\setminus X$ denotes the complement.

Let us fix a specific region $A \subset \Lambda$, e.g., a cone stretching out to infinity, and let $B:=A^c$ be its complement.
We are interested in whether Haag duality holds for $A$, that is, whether $M_A= M_B'$.
We now consider an increasing family of finite regions that exhaust $\Lambda$.
For concreteness, we take balls of increasing radii 
\begin{equation}
    D_r = \big\{ x \in \Lambda \ : \ |x-x_0| \le r\big\}, \qquad r>0,
\end{equation}
centered around some point $x_0\in \RR^k$.
The algebra $M_{D_r}$ is a finite type I factor. 
In particular, this implies that Haag duality holds for $D_r$:
\begin{equation}\label{eq:HD-for-Dr}
    (M_{D_r})' = M_{D_r^c}.
\end{equation}
Setting $A_r := D_r\cap A$ and $B_r:= D_r\cap B$, we obtain increasing families of matrix algebras $(M_{A_r})_r$ and $(M_{B_r})_r$ that generate $M_A$ and $M_B$, allowing us to apply Theorem~\ref{introthm:abstract}.
The relative commutant $M_{A_r}'\cap M_{A_R}$ is then the algebra associated with the region $A_R \setminus A_r$ and similarly for $B$.
The choice of $(D_r)$ as radius-$r$ balls gives
\begin{equation}
    A_{r,R} = \big\{ x\in A \ : \ r<|x-x_0|\le R \big\}
\end{equation}
and analogously for $B_{r,R}$.
The case where $A$ is a cone in three dimensions is visualized in Fig.~\ref{fig:3D-case}, and the case where it is a single cone or a union of disjoint cones in two dimensions is visualized in Fig.~\ref{fig:2D-case}.
Our Haag duality characterization is now concerned with the conditional mutual information $I(D_r :D_R^c \mid A_{r,R})_\omega := I(M_{D_r}:M_{D_R}' \mid M_{A_{r,R}})_\Omega$. 
This shorthand is justified by \eqref{eq:HD-for-Dr}.
In the situation under consideration, Theorem~\ref{introthm:abstract} gives:

\begin{theorem}\label{thm:all-dim}
    Haag duality holds for the region $A$ in the GNS representation of the pure state $\omega$ if and only if 
    \begin{equation}\label{eq:all-dim-inner-limit}
        \lim_{R\to\oo} I(D_r : D_R^c \mid A_{r,R})_\omega
    \end{equation}
    vanishes as $r\to\oo$ if and only if \eqref{eq:all-dim-inner-limit} is zero for each $r>0$.
\end{theorem}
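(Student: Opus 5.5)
Theorem~\ref{thm:all-dim} will follow from Theorem~\ref{introthm:abstract} once its standing hypotheses are checked and the abstract quantities are identified with the geometric ones. I will replace the discrete index $n$ by the radius. Since $\Lambda$ is a lattice, $D_r$ only changes at countably many radii, and it is finite for each $r$. So I pick an increasing sequence $r_n\to\oo$ that includes every jump radius up to any given size, or simply work with the monotone net directly. The limits in $R$ and $r$ then agree with the sequential limits, because the conditional mutual information depends on $r,R$ only through the finite sets $D_r, D_R$.

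\textbf{Step 1: hypotheses of Theorem~\ref{introthm:abstract}.} Set $A_n:=M_{A_{r_n}}$ and $B_n:=M_{B_{r_n}}$. These are finite type I factors, since each is $\pi$ of a full matrix algebra and $\pi$ is injective on the simple algebra $\A_X$ for finite $X$. They are increasing. They generate $M_A$ and $M_B$, because $\bigcup_n \A_{A_{r_n}}$ is norm dense in $\A_A$ and the weak closure of $\pi$ of a dense subalgebra equals $\pi(\A_A)''$. The two families commute because $A\cap B=\emptyset$. The algebras $M_A$ and $M_B$ are factors, and $M_A\vee M_B=\pi(\A_\Lambda)''=B(\H)$ by irreducibility of the GNS representation of the pure state $\omega$.

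\textbf{Step 2: identification of the algebras.} For finite $X\subset Y$ we have $\A_X'\cap\A_Y=\A_{Y\setminus X}$ (the tensor-product commutant). Because $\pi$ is faithful on finite-dimensional $\A_Y$, this gives $A_{m,n}=M_{A_{r_m,r_n}}$ and $B_{m,n}=M_{B_{r_m,r_n}}$. Also $D_n=A_n\vee B_n=M_{D_{r_n}}$. By \eqref{eq:HD-for-Dr}, which holds because $M_{D_r}$ is a finite type I factor and $M_{D_r}\vee M_{D_r^c}=B(\H)$, we get $D_n'=M_{D_{r_n}^c}$. Therefore
\[
I(D_m:D_n'\mid A_{m,n})_\Omega=I(D_{r_m}:D_{r_n}^c\mid A_{r_m,r_n})_\omega .
\]

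\textbf{Step 3: reading off the equivalences.} Applying Theorem~\ref{introthm:abstract}(a)$\Leftrightarrow$(b) with the GNS vector $\Omega$ shows that Haag duality for $A$ is equivalent to \eqref{eq:all-dim-inner-limit} being zero for each $r$. Part (c) with the same $\Omega$ gives that, under Haag duality, the limit is zero and so tends to zero as $r\to\oo$. Conversely, if it tends to zero, then condition (c) holds and Haag duality follows.

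\textbf{Main obstacle.} The only delicate point is that the theorem is stated for all $r>0$ and for continuous limits. I would handle this by noting that $I(D_r:D_R^c\mid A_{r,R})$ is a step function of $(r,R)$. I would then choose the sequence $(r_n)$ to contain any prescribed $r$; this is allowed because Theorem~\ref{introthm:abstract} applies to every increasing generating sequence. The remaining checks are the faithfulness and relative-commutant identities above, which are standard finite-dimensional facts.
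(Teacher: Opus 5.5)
Your proposal is correct and follows the paper's route: the paper likewise obtains Theorem~\ref{thm:all-dim} by plugging $A_r=D_r\cap A$ and $B_r=D_r\cap B$ into Theorem~\ref{introthm:abstract}, using $M_{A_r}'\cap M_{A_R}=M_{A_{r,R}}$ and $M_{D_R}'=M_{D_R^c}$ from \eqref{eq:HD-for-Dr}; your discretization of the radius and your faithfulness checks fill in details the paper leaves implicit. One small correction: (b) quantifies over all unit vectors, so the step from ``\eqref{eq:all-dim-inner-limit} is zero for each $r$'' (for the GNS vector only) back to Haag duality must go through (c), not (b); your final sentence in Step~3 does exactly this, so the chain of implications still closes.
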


\subsection{Topologically ordered states in 2D}

We analyze our criterion in two-dimensional spin systems.
We consider two cases: $A$ is a single cone, and $A$ is the union of multiple disjoint cones, see Fig.~\ref{fig:2D-case}.
In the single-cone case, Haag duality is, in general, expected to hold, although \cite{wallick_nonabelian_2026} showed that it fails for a non-abelian anyon.
For multiple cones, Haag duality is known to fail in some exactly solvable models by \cite{naaijkens_haag_2012}, and an argument in \cite{van_luijk_uniqueness_2026}, based on the failure of the uniqueness of purifications, shows that it should fail in general topologically ordered systems.

\begin{figure}[ht!]
\centering
% =========================================================
% (a) Single cone with its tip at the center of the disks
% =========================================================
\begin{subfigure}[t]{0.48\textwidth}
\centering
\begin{tikzpicture}[
scale=0.7,
transform shape,
line cap=round,
line join=round,
every node/.style={font=\small}
]
\def\rin{0.9}
\def\rout{3.15}
\def\raylen{5}

% Boundary angles of the cone
\def\rightlower{15}
\def\rightupper{65}

\coordinate (O) at (0,0);
\coordinate (T) at (0,0);

% Outer disk
\fill[exteriorcolor, opacity=0.18]
  (O) circle (\rout);

% Cone restricted to the (r,R)-annulus
\begin{scope}
  \pgfseteorule
  \clip
    (O) circle (\rout)
    (O) circle (\rin);

  \path[fill=conepale, draw=none]
    (T)
    -- ([shift={(\rightlower:\raylen)}]T)
    -- ([shift={(\rightupper:\raylen)}]T)
    -- cycle;
\end{scope}

% Inner disk D_r
\fill[sphereblue, opacity=0.12]
  (O) circle (\rin);

% Light continuations of the cone rays
\draw[wiregray, opacity=0.35, thin]
  (T) -- ([shift={(\rightlower:\raylen)}]T);
\draw[wiregray, opacity=0.35, thin]
  (T) -- ([shift={(\rightupper:\raylen)}]T);
% Circular boundaries
\draw[sphereblue, opacity=0.92, line width=0.9pt]
  (O) circle (\rin);
\draw[wiregray, opacity=0.55, semithick]
  (O) circle (\rout);
% Strong cone boundaries restricted to the annulus
\begin{scope}
  \pgfseteorule
  \clip
    (O) circle (\rout)
    (O) circle (\rin);

  \draw[coneedge, line width=0.9pt]
    (T) -- ([shift={(\rightlower:\raylen)}]T);

  \draw[coneedge, line width=0.9pt]
    (T) -- ([shift={(\rightupper:\raylen)}]T);
\end{scope}

% Labels
\node at (-0.30,0.40) {$D_r$};
\node at (-4.6,0.40) {};
\node at (1.62,1.36) {$A_{r,R}$};
\node at (-1.60,1.45) {$B_{r,R}$};
\node at (0,3.7) {$D_R^c$};
\end{tikzpicture}

\caption{}
\label{fig:2D-one-cone}
\end{subfigure}
\hfill
% =========================================================
% (b) Two cones with separated tips
% =========================================================
\begin{subfigure}[t]{0.48\textwidth}
\centering
\begin{tikzpicture}[
scale=0.7,
transform shape,
line cap=round,
line join=round,
every node/.style={font=\small}
]
\def\rin{0.9}
\def\rout{3.15}
\def\raylen{5}

% Boundary angles of the right-hand cone
\def\rightlower{15}
\def\rightupper{65}

% Boundary angles of the left-hand cone
\def\leftupper{115}
\def\leftlower{165}

\coordinate (O)  at (0,0);

% Separated cone tips inside D_r
\coordinate (TL) at (-0.42,0);
\coordinate (TR) at ( 0.42,0);

% Outer disk
\fill[exteriorcolor, opacity=0.18]
  (O) circle (\rout);

% Both cones restricted to the (r,R)-annulus
\begin{scope}
  \pgfseteorule
  \clip
    (O) circle (\rout)
    (O) circle (\rin);

  % Right-hand cone
  \path[fill=conepale, draw=none]
    (TR)
    -- ([shift={(\rightlower:\raylen)}]TR)
    -- ([shift={(\rightupper:\raylen)}]TR)
    -- cycle;

  % Left-hand cone
  \path[fill=conepale, draw=none]
    (TL)
    -- ([shift={(\leftupper:\raylen)}]TL)
    -- ([shift={(\leftlower:\raylen)}]TL)
    -- cycle;
\end{scope}

% Inner disk D_r
\fill[sphereblue, opacity=0.12]
  (O) circle (\rin);

% Light continuations of the right-hand cone rays
\draw[wiregray, opacity=0.35, thin]
  (TR) -- ([shift={(\rightlower:\raylen)}]TR);

\draw[wiregray, opacity=0.35, thin]
  (TR) -- ([shift={(\rightupper:\raylen)}]TR);

% Light continuations of the left-hand cone rays
\draw[wiregray, opacity=0.35, thin]
  (TL) -- ([shift={(\leftupper:\raylen)}]TL);

\draw[wiregray, opacity=0.35, thin]
  (TL) -- ([shift={(\leftlower:\raylen)}]TL);

% Circular boundaries
\draw[sphereblue, opacity=0.92, line width=0.9pt]
  (O) circle (\rin);

\draw[wiregray, opacity=0.55, semithick]
  (O) circle (\rout);

% Strong cone boundaries restricted to the annulus
\begin{scope}
  \pgfseteorule
  \clip
    (O) circle (\rout)
    (O) circle (\rin);

  % Right-hand cone
  \draw[coneedge, line width=0.9pt]
    (TR) -- ([shift={(\rightlower:\raylen)}]TR);

  \draw[coneedge, line width=0.9pt]
    (TR) -- ([shift={(\rightupper:\raylen)}]TR);

  % Left-hand cone
  \draw[coneedge, line width=0.9pt]
    (TL) -- ([shift={(\leftupper:\raylen)}]TL);

  \draw[coneedge, line width=0.9pt]
    (TL) -- ([shift={(\leftlower:\raylen)}]TL);
\end{scope}

% Labels
\node at (0.0,0.40) {$D_r$};

\node at ( 1.62,1.36) {$A_{r,R}$};
\node at (-1.62,1.36) {$A_{r,R}$};

\node at (0, 2.15) {$B_{r,R}$};
\node at (0,-1.85) {$B_{r,R}$};

\node at (0,3.7) {$D_R^c$};

\end{tikzpicture}

\caption{}
\label{fig:2D-two-cones}
\end{subfigure}

\caption{Geometry in a two-dimensional spin system. On the left, the region $A$ is a single cone, on the right, it is the union of two cones.
Depicted are a disk $D_r$ of radius $r$ (blue), and the portion $A_{r,R}$ of the region $A$ in the annulus of radii $r<R$ (yellow), as well as the remaining portion $B_{r,R}$ of the annulus (grey).
}
\label{fig:2D-case}
\end{figure}

To evaluate our criterion, we assume that the state satisfies an entanglement law of the form \cite{KitaevPreskill2006, LevinWen2006}
\begin{equation}\label{eq:entanglement-law}
  S(X)_\omega =\alpha|\partial X|-\gamma\, b_{0}(\partial X)+o(1),
\end{equation}
where $S(X)_\omega:=S(M_X)_\Omega$ is the von Neumann entropy of the reduced state.
Here, $\alpha>0$ is a constant, $b_{0}(\partial X)$ is the number of connected components of $\partial X$, $\gamma\ge 0$ is a constant called the topological entanglement entropy, and the $o(1)$ term denotes subleading corrections that vanish in the large-$X$ limit.
An entanglement law of this form holds for many topologically ordered ground states \cite{KitaevPreskill2006, LevinWen2006}, but it is not true in general \cite{Zou_2016,Kim2023, levin_physical_2024}. 
Now suppose $A$ is a union of $n\ge 1$ disjoint cones. 
Then
$
    b_0(\partial A_{r,R}) = b_0(\partial B_{r,R}) = n,
$
so that eq.~\eqref{eq:entanglement-law} gives
\begin{align}
    I(D_r : D_R^c \mid A_{r,R})_\omega 
    &= S( D_r \cup A_{r,R})_\omega + S(D_r \cup B_{r,R})_\omega - S(A_{r,R})_\omega - S(B_{r,R})_\omega\nonumber\\
    &= \alpha \cdot 0 + \gamma (-1 -1 + n + n) + o(1) \nonumber\\
    &= 2(n-1)\gamma + o(1).
\end{align}
If, for each of the regions $D_r \cup A_{r,R}$, $D_r \cup B_{r,R}$, $A_{r,R}$, and $B_{r,R}$, the error term $o(1)$ indeed goes to zero as $|\partial X| \to \oo$, then the limit of the conditional mutual information is
\begin{equation}\label{eq:n-cone-CMI}
    \lim_{R\to\oo} \, I(D_r : D_R^c \mid A_{r,R})_\omega =2(n-1)\gamma.
\end{equation}
This limit is zero if and only if $\gamma=0$ or $n=1$.
The argument can be generalized to any spatial dimension as long as eq.~\eqref{eq:entanglement-law} holds.

However, as noted in \cite{Kim2023,levin_physical_2024}, the constant $\gamma$ is \textit{not} invariant under finite-depth quantum circuits.
As discussed in \cite{Kim2023}, a finite-depth quantum circuit can even map a state with $\gamma=0$ to a state with $\gamma>0$.
In combination with our result that Haag duality for unions of two or more disjoint cones is equivalent to $\gamma=0$, this confirms the intuition in
\cite{Ogata_2022_tensorcat} that Haag duality is not preserved under quasi-local unitaries in general.%
\footnote{
Under physically plausible assumptions, the topological entanglement entropy satisfies $\gamma\ge \log D$, where $D$ is the total quantum dimension of the underlying anyon theory \cite{KitaevPreskill2006,LevinWen2006,Kim2023,levin_physical_2024}.
Assuming the so-called distal split property, \cite{naaijkens_haag_2012} shows that $D^2$ is bounded from above by the Jones-Kosaki-Longo subfactor index \cite{Jones1983,Kosaki1986Index,Longo1989} of the inclusion $M_A \subset M_{A^c}'$, where $A=A_1\cup A_2$ is the union of two sufficiently separated cones.
We expect a direct relation between
index and the total quantum dimension to hold. We leave this for future work.
}

\subsubsection{Haag duality and the approximate entanglement bootstrap program}

It is interesting to consider our result in the context of the entanglement bootstrap program \cite{Shi_2019,Shi_2020}, which derives the structure of topological order in 2D spin systems from two microscopical entanglement axioms.
Assuming the global state is pure, the axiom {\bf A0} asserts $I(D_r : D_R^c)_\omega=0$ for sufficiently small radii $r<R$, and {\bf A1} asserts $I(D_r : D_R^c \mid A_{r,R})_\omega=0$ for all cones $A$ and sufficiently small radii $r<R$.
A gluing mechanism then extends these axioms to arbitrary radii $r<R$ \cite{Shi_2019,Shi_2020}.
However, these axioms are rather restrictive. In particular, they exclude chiral topologically ordered phases, which are generally expected to have a nonzero correlation length. Nevertheless, it is believed that suitable approximate versions of these axioms may provide a general framework for two-dimensional topologically ordered states; see, e.g., \cite{KimKitaevRanard2026}.
In an approximate version, {\bf A0} and {\bf A1} would only be required to hold in the limit of large regions.
Versions with specified decay rates have been discussed in \cite{yang_modular_2026}.
Theorem~\ref{thm:all-dim} suggests to consider a version without any specified decay rate. 
Take a concentric family $D_r$ of disks of radius $r>0$.
Let us say that the pure state $\omega$ satisfies approximate {\bf A0} if $\lim_{r\to \oo} \lim_{R\to \oo} I(D_r : D_R^c)_\omega =0$, and that it satisfies approximate {\bf A1} if, for all cones $A$, $\lim_{r\to\oo} \lim_{R\to\oo} I(D_r : D_R^c \mid A_{r,R})_\omega =0$.
By Theorem~\ref{thm:all-dim}, this approximate {\bf A1} axiom is equivalent to Haag duality for \emph{all} cones.
Our assumption of considering a pure state in the thermodynamic limit already entails approximate {\bf A0}. 
Indeed, it is just approximate {\bf A1} for the empty cone $A=\emptyset$, which satisfies Haag duality $M_{\emptyset}=\CC = B(\H)'$.
Since Haag duality is independent of the specific choice of $D_r$ (e.g., the shape or the center point), the approximate axioms do not depend on this choice either.
Now, when Haag duality holds, DHR sector theory constructs the braided monoidal C*-category of transportable supereselection sectors localized in cones \cite{DoplicherHaagRoberts1971,DoplicherHaagRoberts1974,BuchholzFredenhagen1982,naaijkens_localized_2011,Ogata_2022_tensorcat}.
In forthcoming work, we show that this framework also accommodates chiral topological phases, including the non-Abelian Ising phase of the Kitaev honeycomb model~\cite{KitaevHoneycombEO}.
Therefore, the approximate entanglement bootstrap axioms without specified decay already encode the fusion rules and braiding of topological charges.

% \lvl{
% Consequently, by the above, Haag duality for a fixed region $A$ is not invariant under finite-depth quantum circuits and, thus, not stable within a topological phase.
% }
% This confirms the \lvl{intuition} in
% \cite{Ogata_2022_tensorcat} that Haag duality is not preserved under \lvl{quasi-local unitaries} in general.
% To the best of our knowledge, our result gives the first proof of this fact.

% Consequently, our CMI $\lim_{R\to\infty} I(D_{r}:D_{R}^{c}\mid A_{r,R})$
% can likewise be altered by an FDQC, just as axiom \textbf{A1} of the entanglement
% bootstrap can. This is consistent with the observation in
% \cite{Ogata_2022_tensorcat} that Haag duality is not preserved under FDQCs in general.

The above observations that Haag duality for a single cone follows from an entanglement law of the form \eqref{eq:entanglement-law} or from the entanglement bootstrap axiom \textbf{A1} imply the Haag duality theorems proven in \cite{naaijkens_haag_2012,fiedler_haag_2015,ogata_haag_2025}.

\subsubsection{Failure of Haag duality for non-abelian anyons}

It has recently been shown that a state containing a single non-abelian anyon can violate single-cone Haag duality, even in its approximate form \cite{wallick_nonabelian_2026}. 
We now test this example using our criterion.
The GNS representation of a single non-abelian anyon can be obtained from that of the vacuum by creating an anyon pair and transporting one of the anyons to infinity \cite{naaijkens_localized_2011,naaijkens_haag_2012,fiedler_haag_2015}, see Fig.~\ref{fig:single-cone-anyon}. 
Ref.~\cite{Shi_2019} shows that if a finite, simply connected region $X$ contains an anyon $a$, the entanglement entropy of $X$ shifts by a fixed amount:
\begin{equation}
    S_{a}(X)-S(X)=\log d_{a}
\end{equation}
where $d_{a}$ is the quantum dimension of $a$.
In our setup, $D_{r}\cup A_{r,R}$ and $D_{r}\cup B_{r,R}$ each contain the anyon, while $A_{r,R}$ and $B_{r,R}$ do not. This gives
\begin{equation}
    I_{a}(D_{r}:D_{R}^c\mid A_{r,R})-I(D_{r}:D^c_{R}\mid A_{r,R})=2\log d_{a}.
\end{equation}
Since $d_{a}>1$ for any non-abelian anyon, if the vacuum state satisfies Haag duality, we have $\lim_{R\to\infty}I_{a}(D_{r}:D_{R}^c\mid A_{r,R})=2\log d_{a}>0$ for any fixed $r$ such that the anyon is contained in $D_r$.
Thus, a non-abelian anyon violates Haag duality for single cones, reproducing part of the result of \cite{wallick_nonabelian_2026}.

Another application of our theorem is that it makes obvious that Haag duality is unaltered if the region $A$ is altered in a finite way, which was previously observed in \cite{vanluijk2025largescalestructureentanglementquantum}.
Indeed, such a change is simply not seen by the conditional mutual information \eqref{eq:all-dim-inner-limit} if the radius $r$ is large enough.

\begin{figure}[ht!]
  \centering
    \resizebox{0.4\linewidth}{!}{%
      \begin{tikzpicture}[
        line cap=round,
        line join=round,
        every node/.style={font=\small}
      ]

        % Geometry
        \def\rin{0.9}
        \def\rout{3.15}
        \def\theta{30}
        \def\raylen{5.2}

        \coordinate (O) at (0,0);

        % Anyon position
        \coordinate (anyon) at (-0.3,-0.3);

        % Same bounding box as panel (a)
        \path[use as bounding box]
          (-3.45,-3.35) rectangle (5.35,3.35);

        % Outer disk
        \fill[exteriorcolor, opacity=0.18]
          (O) circle (\rout);

        % Cone portion of the annulus: A_{r,R}
        \path[
          fill=conepale,
          draw=none
        ]
          (\theta:\rin)
          -- (\theta:\rout)
          arc[
            start angle=\theta,
            end angle=-\theta,
            radius=\rout
          ]
          -- (-\theta:\rin)
          arc[
            start angle=-\theta,
            end angle=\theta,
            radius=\rin
          ]
          -- cycle;

        % Inner disk D_r
        \fill[
          sphereblue,
          opacity=0.12
        ]
          (O) circle (\rin);

        % Light continuations of the cone rays
        \draw[
          wiregray,
          opacity=0.35,
          thin
        ]
          (O) -- (\theta:\raylen);

        \draw[
          wiregray,
          opacity=0.35,
          thin
        ]
          (O) -- (-\theta:\raylen);

        % Circular boundaries
        \draw[
          sphereblue,
          opacity=0.92,
          line width=0.9pt
        ]
          (O) circle (\rin);

        \draw[
          wiregray,
          opacity=0.55,
          semithick
        ]
          (O) circle (\rout);

        % Visible cone boundaries inside the shell
        \draw[
          coneedge,
          line width=0.9pt
        ]
          (\theta:\rin) -- (\theta:\rout);

        \draw[
          coneedge,
          line width=0.9pt
        ]
          (-\theta:\rin) -- (-\theta:\rout);

        % ========================================================
        % Anyon a
        % ========================================================
        \fill[red!75!black]
          (anyon) circle (2.8pt);

        \node at ($(anyon)+(0.0,-0.30)$)
          {$a$};

        % ========================================================
        % Semi-infinite string operator
        % ========================================================
        \draw[
          red!75!black,
          line width=1.0pt,
          ->,
          >=Stealth
        ]
          (anyon) -- (10:\raylen);

        % String label
        \node[
          align=left,
          font=\scriptsize,
          anchor=east
        ]
          at (5.10,1.45)
          {string operator\\[-1pt]to infinity};

        % Labels
        \node at (-0.30,0.40) {$D_r$};
        \node at (2.10,-0.28) {$A_{r,R}$};
        \node at (-1.60,1.45) {$B_{r,R}$};
        \node at (-3.05,2.55) {$D_R^c$};

      \end{tikzpicture}%
    }

  \caption{
    A state with an anyon $a$ contained in the inner disk $D_r$. The anyon is attached to a semi-infinite string operator supported inside the cone.
    }
  \label{fig:single-cone-anyon}
\end{figure}

\subsection{Translation-invariant spin chains and entropy density}

We consider a translation-invariant pure state $\omega$ on the quasi-local algebra $\A_\ZZ =\bigotimes_\ZZ M_d(\CC)$ of a quantum spin chain with local Hilbert space dimension $d$.
% We denote the subalgebra of the quasi-local algebra $\A_\ZZ$ associated with a region $A \subset \ZZ$ by $\A_{A}$.
For a finite interval $A\subset \ZZ$, translation invariance implies that the von Neumann entropy $S(A)_\omega$ of the reduced state only depends on the interval's length.
We set $S_n := S([0,n))_\omega$.
The entropy density of the state $\omega$ is 
\begin{equation}\label{eq:entropy-density}
    s := \lim_{n\to \oo} \frac{S_n }n.
\end{equation}
Existence of the limit \eqref{eq:entropy-density} is a well-known consequence of strong sub-additivity \cite{Fannes_2003}. 
Moreover, $s$ is finite: $0\le s \le \log(d)$.
For translation-invariant pure states, it is believed that the entropy density is zero, $s=0$.
This is known as the zero entropy density conjecture \cite{farkas_sharpness_2005,Fannes_2003,farkas_von_2007}.
A non-zero entropy density would amount to a volume law $S_n \sim s\cdot n$.
Strikingly, one can get arbitrarily close to a volume law as is shown in \cite{farkas_sharpness_2005}, which constructs translation-invariant pure states with arbitrarily fast sublinear growth.

We now use our abstract result Theorem \ref{thm:abstract} to show that the zero entropy density conjecture is equivalent to half-chain Haag duality.
We denote the half-chains by $ A =(-\oo,0]$ and $ B= (0,\oo)$.
We define finite approximations by $ A_n = (-n,0]$ and $ B_n = (0,n]$.
We define $D_n = A_n\cup B_n$, $A_{m,n} =A_n\setminus A_m$ and $B_{m,n} = B_n\setminus B_m$, so that
\begin{equation}
    D_n = (-n,n],\qquad
    A_{m,n} = (-n,-m], \qquad 
          B_{m,n} = (m,n]
    % \begin{aligned}
    %       D_n &=  A_n\cup  B_n = (-n,n] \qquad
    %      & D_n^c & =(-\oo,-n]\cup (n,\oo) \\
    %       A_{m,n}  &=  A_n \cap  C_m = (-n,-m] \quad &
    %       B_{m,n} &=  B_n \cap  C_m = (m,n].
    % \end{aligned}
\end{equation}
As for general spin systems, we consider the von Neumann algebras $M_X = \pi(\A_X)''$ in the GNS representation $(\H,\pi,\Omega)$, and we write $I(X : Y \mid Z):=I(M_X : M_Y \mid M_Z)_\Omega$.
We can now evaluate the conditional mutual information using eq.~\eqref{eq:CMI-entropy-rewrite}:
\begin{align}
    I(D_m : D_n^c \mid A_{m,n})_\omega
    & = S\big( (-n,m] \big)_\omega - S\big( (-n,-m] \big)_\omega + S\big( (-m,n] \big)_\omega - S\big( (m,n] \big)_\omega\nonumber \\
    & = 2 (S_{n+m}-S_{n-m}).
\end{align}
Next, we need to take the limit $n\to \oo$.
For this, we use the formula $\lim_{m\to \oo} (S_{m+1}-S_m) = s$, valid for all translation-invariant states \cite[Eq.~(62)]{fannes_fermionic_2012}, which implies

\begin{equation}
    \lim_{n\to \oo } \,(S_{n+m}- S_{n-m}) = 2ms
\end{equation}
via a telescoping sum.
Therefore, the conditional mutual information vanishes in the limit $n\to \oo$ for each $m$ if and only if the entropy density $s$ is zero.
Hence, our abstract result gives:

\begin{theorem}\label{thm:1d}
    Let $\omega$ be a translation-invariant pure state on $\A_\ZZ$.
    Then half-chain Haag duality holds in the GNS representation if and only if the entropy density is zero.
\end{theorem}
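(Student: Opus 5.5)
The proof applies Theorem~\ref{introthm:abstract} to the half-chain setting. Take $A=M_{(-\oo,0]}$ and $B=M_{(0,\oo)}$ in the GNS representation $(\H,\pi,\Omega)$, with finite approximations $A_n=M_{(-n,0]}$ and $B_n=M_{(0,n]}$. These are increasing families of finite type I factors generating $A$ and $B$. Purity of $\omega$ makes $\pi$ irreducible, so $A$ and $B$ are commuting factors with $A\vee B=B(\H)$. Since $A_n$ and $A_m$ are tensor-product subalgebras of $\pi(\A_{\mathrm{loc}})$, the relative commutants are $A_{m,n}=M_{(-n,-m]}$ and $B_{m,n}=M_{(m,n]}$. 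Also $D_n=M_{(-n,n]}$, and by \eqref{eq:HD-for-Dr} we have $D_n'=M_{D_n^c}$. The hypotheses of the abstract theorem therefore hold. Its equivalence (a)$\Leftrightarrow$(b), applied with the GNS vector $\Omega$, reduces half-chain Haag duality to the statement that $\lim_{n\to\oo} I(D_m:D_n^c\mid A_{m,n})_\omega=0$ for every $m$. Strictly, (b) quantifies over all unit vectors. We use (c)$\Rightarrow$(a) for the vector $\Omega$ in one direction, and (a)$\Rightarrow$(b) specialized to $\Omega$ in the other.

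Next we compute the conditional mutual information from \eqref{eq:CMI-entropy-rewrite}. The reduced entropies of $\Omega$ on the matrix algebras $M_X$, for finite $X$, coincide with the entropies $S(X)_\omega$ of the restricted state. This gives
\begin{equation*}
I(D_m:D_n^c\mid A_{m,n})_\omega = S((-n,m])+S((-m,n])-S((-n,-m])-S((m,n]).
\end{equation*}
By translation invariance this equals $2(S_{n+m}-S_{n-m})$.

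The main step is evaluating $\lim_{n\to\oo}(S_{n+m}-S_{n-m})$. Write it as the telescoping sum $\sum_{k=n-m}^{n+m-1}(S_{k+1}-S_k)$, which has $2m$ terms. Strong subadditivity combined with translation invariance shows that $S_{k+1}-S_k$ is nonincreasing in $k$ and bounded below by $-\log d$. Hence it converges, and by Cesàro averaging its limit is $\lim_k S_k/k=s$; this is the cited formula \cite[Eq.~(62)]{fannes_fermionic_2012}. For fixed $m$, each of the $2m$ terms therefore tends to $s$, so the conditional mutual information tends to $4ms$.

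Consequently, the limit vanishes for every $m$ if and only if $s=0$. If $s>0$, the limit $4ms$ is positive for every $m\ge1$ and even diverges as $m\to\oo$, so condition (c) fails for $\Omega$ and Haag duality fails. If $s=0$, the limit is zero for all $m$, so (c) holds for $\Omega$ and Haag duality follows. The only delicate point is identifying the relative commutants with the algebras of the complementary intervals. This holds because the algebras are finite-dimensional full matrix algebras embedded as tensor factors in the irreducible representation.
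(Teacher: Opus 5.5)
Your proposal is correct and follows the paper's proof: the same interval choices $A_{m,n}=(-n,-m]$ and $B_{m,n}=(m,n]$, the same reduction of the conditional mutual information to $2(S_{n+m}-S_{n-m})$, and the same telescoping argument giving the limit $4ms$, fed into the abstract theorem. The only addition is that you sketch why $S_{k+1}-S_k\to s$ (strong-subadditivity concavity plus a Ces\`aro argument) and spell out which implications of the abstract theorem are used for the GNS vector, whereas the paper simply cites the increment formula.
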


Directly checking half-chain Haag duality of a given translation-invariant pure state can be extremely hard.
In contrast, checking whether the entropy density is zero is often simple.
In fact, for large classes of models this is simply clear from the outset.
Moreover, whether the entropy density vanishes can, in principle, be studied numerically.\footnote{Note, however, that even when the entropy density is zero, the convergence rate $S_n/n \to 0$ can be arbitrarily bad \cite{farkas_sharpness_2005}. }

By Theorem~\ref{thm:1d}, the zero entropy density conjecture is equivalent to the half-chain Haag duality conjecture.
In footnote \ref{footnote:wrong-hd-proofs}, we mentioned two flawed proofs of the Haag duality conjecture.
Similarly, the preprint \cite{mohari_translation_2023} claims to resolve the zero entropy density conjecture, but its argument seems to have a serious gap.
% \footnote{
% The passage from Thm.~6.12 to Thm.~6.13, claims that the entropy density (denoted $s(\omega, \mathbb M_{loc})$ in \cite{mohari_translation_2023}) equals the entropic quantity $s(\omega)$ defined just before Thm.~6.12.
% However, the displayed argument based on the properties (a), (b), (c) only implies the inequality $s(\omega)\le s(\omega,\mathbb M_{loc})$.
% Thm.~6.12 claim that $s(\omega)$ equals the Connes-Størmer dynamical entropy, which is known to vanish for pure states, so the implication that is being proven simply amounts to the trivial one $0\le s$.
% }
Therefore, we regard both conjectures as open. 

\subsubsection{A strengthened Lieb-Schultz-Mattis constraint}

As an application of Theorem~\ref{thm:1d}, we derive an improved Lieb-Schultz-Mattis (LSM) constraint. 
Let $L = (-\infty,0]$ and $R = [1,\infty)$. 
A pure state $\omega$ on $\A_{\ZZ}$ satisfies the \emph{split property} if the half-chain algebras $M_{L}$ and, therefore, $M_R$ , are of type~$\mathrm{I}$ \cite{matsui2008spectralgapu1symmetrysplit,matsui2011boundedness}.
We denote the translation automorphism on $\A_\ZZ$ by $\tau$.

\begin{lemma}[{\cite[Cor.~4.2(i)]{matsui2008spectralgapu1symmetrysplit}}]\label{lem:matsui}
    Let $\omega$ be a translation-invariant pure state on $\A_{\ZZ}$ satisfying half-chain Haag duality. If $\omega$ satisfies the following clustering condition
    \begin{equation}\label{eq:C_j-Matsui}
        \sup_{\substack{a\in\A_L,\ b\in\A_R \\ \|a\|,\|b\|\le 1}}
        \bigl|\omega(a\,\tau^j(b)) - \omega(a)\,\omega(b)\bigr| \longrightarrow 0 \quad (j \to \infty),
    \end{equation}
    then $\omega$ satisfies the split property.
\end{lemma}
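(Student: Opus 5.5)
This lemma is quoted from Matsui, so the plan is to reconstruct the argument of \cite[Cor.~4.2(i)]{matsui2008spectralgapu1symmetrysplit} in our notation rather than to invoke new machinery. The strategy is to show that $M_L$ is a type I factor by exhibiting it as a factor whose commutant is $M_R$ (Haag duality) and whose associated state has suitable asymptotic decoupling from a translated copy of $M_R$. A normal product state of this kind forces type I.

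In the first step we use translation invariance and purity to realize $\tau$ in the GNS representation by a unitary $U$ with $U\Omega=\Omega$ and $U\pi(x)U^*=\pi(\tau(x))$. For $j\ge0$ this gives $\mathrm{Ad}\,U^j(M_R)=M_{[1+j,\infty)}\subset M_R$.

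In the second step we upgrade the uniform clustering condition \eqref{eq:C_j-Matsui} from local algebras to von Neumann algebras. By Kaplansky density the supremum is unchanged if $a$ ranges over the unit ball of $M_L$ and $b$ over the unit ball of $M_{[1+j,\infty)}$. Thus the vector state $\omega_\Omega$ restricted to $M_L\vee M_{[1+j,\infty)}$ is within $\varepsilon_j\to0$ of the product functional $\omega|_{M_L}\otimes\omega|_{M_{[1+j,\infty)}}$, uniformly on products of unit-ball elements.

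In the third step we show that for large $j$ this forces a normal product state on $M_L\vee M_{[1+j,\infty)}$, that is, the split property for the pair $M_L\subset M_{[1+j,\infty)}'$. The finite gap $[1,j]$ is a finite type I factor, and $M_{[1,j]}\vee M_{[1+j,\infty)}=M_R$. Combining this with Haag duality, $M_L=M_R'$, we obtain $M_{[1+j,\infty)}'=M_L\vee M_{[1,j]}\cong M_L\ox M_d(\CC)^{\ox j}$. The split inclusion then yields an intermediate type I factor between $M_L$ and $M_L\ox M_{d^j}$, which forces $M_L$ itself to be type I. Indeed, the intermediate type I factor $N$ has $N'\cap(M_L\ox M_{d^j})$ containing $M_{d^j}$ only up to finite index, and an inclusion of finite index into a type I factor cannot start from a non-type-I factor.

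The main obstacle is the third step, namely deducing normality of the product state from \emph{uniform} norm-closeness with a small but nonzero $\varepsilon_j$. The plan is to follow Matsui: uniform clustering gives that the state on $M_L\vee M_{[1+j,\infty)}$ is quasi-equivalent to the product state. This uses a Powers-type criterion: a state $\psi$ on a tensor product with $\|\psi-\psi_1\ox\psi_2\|<\delta$ small on the algebraic tensor product, together with factoriality of $M_L$ (from purity), implies quasi-equivalence. Quasi-equivalence then makes the product state normal, which gives the split property. If this step proves hard to carry out, we fall back on Matsui's original argument, where Haag duality is used exactly to identify $M_{[1+j,\infty)}'$ as above.
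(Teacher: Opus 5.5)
There is a genuine gap, and it sits exactly where the content of the lemma lies. The paper gives no proof of its own and only cites Matsui, so your sketch has to stand on its own.

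The clustering hypothesis, even after your Kaplansky upgrade, only controls $\omega(a\,\tau^j(b))-\omega(a)\omega(b)$ on elementary products $a\,\tau^j(b)$. The Powers-type criterion you invoke in Step 3 needs $\|\psi-\psi_1\ox\psi_2\|$ to be small as a functional on the whole tensor product $\A_L\ox\A_{[j+1,\infty)}$, i.e. on sums $\sum_i a_i b_i$ of norm at most one. Nothing in Steps 1--2 bridges these two, and they are genuinely different. Take $\rho_s$, the normalized projection onto the symmetric subspace of $\CC^n\ox\CC^n$, and $\rho_0=n^{-2}\,1$, the product of its marginals. Then $|\tr((\rho_s-\rho_0)(a\ox b))|\le 2/(n+1)$ for all $a,b$ in the unit ball, yet $\|\rho_s-\rho_0\|_1=1-1/n$. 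In your setting the gap $[1,j]$ is a purifying system of dimension $d^j$, so data hiding of this kind is not excluded a priori. Likewise, boundedness (let alone normality) of the product functional on $M_L\vee M_{[1+j,\infty)}$ is precisely what the split property asserts, so it cannot be presupposed.

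A symptom of the gap is that Haag duality plays no essential role in your plan. Suppose you did have $\|\psi-\psi_1\ox\psi_2\|<2$ on $\A_L\ox\A_{[j+1,\infty)}$ for a single $j$. Then you would be done without Haag duality:
\begin{enumerate}
  \item $\psi$ is a type I factor state, since the weak closure of $\pi(\A_L\ox\A_{[j+1,\infty)})$ is $\pi(\A_{[1,j]})'$.
  \item $\omega_L\ox\omega_{[j+1,\infty)}$ is a factor state.
  \item Factor states at distance $<2$ are quasi-equivalent, which forces $M_L\bar\otimes M_{[j+1,\infty)}$ to be type I.
\end{enumerate}

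Your only use of $M_L=M_R'$ is in the final step, ``intermediate type I factor $\Rightarrow$ $M_L$ type I.'' That step is correct but easy. More cleanly: if $M_L\subset T\subset M_L\vee M_{[1,j]}$ with $T$ a type I factor, then
\[
T'\cap(M_L\vee M_{[1,j]})\subset M_L'\cap(M_L\vee M_{[1,j]})=M_{[1,j]}
\]
is finite-dimensional, so $M_L\vee M_{[1,j]}$ is type I.

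What is missing is an argument in which $M_L=M_R'$ is actually used to pass from product-only clustering to a normal product state, or directly to type I. Deferring to ``Matsui's original argument'' leaves the lemma unproved.
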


Matsui's proof assumes that every translation-invariant pure state satisfies half-chain Haag duality, which was claimed in \cite{KEYL_2008} but later it was found the proof is flawed \cite{matsui2011boundedness, van_luijk_entanglement_2025}; we therefore state it as an explicit assumption. The clustering condition in Eq.~\eqref{eq:C_j-Matsui} holds, for example, if there is a function $f\colon \mathbb{N} \to [0,\infty)$ with $\sum_{r\ge1} r f(r) < \infty$, e.g., $f(r)=O(r^{-2-\epsilon})$ for $\epsilon>0$, such that
\begin{equation}\label{eq:f-clustering}
    |\omega(ab) - \omega(a)\omega(b)| \le \|a\|\,\|b\| \sum_{x\in X,\, y\in Y} f(|x-y|)
\end{equation}
for all disjoint finite intervals $X, Y \subset \ZZ$ and all $a \in \A_X$, $b \in \A_Y$.%
\footnote{
Matsui denotes the left-hand side of Eq.~\eqref{eq:C_j-Matsui}, after taking the supremum, by $C_j$. There are exactly $r-j$ pairs $x \le 0$, $y \ge j+1$ with $y-x = r$, so $C_j \le \sum_{r > j} (r-j) f(r) \to 0$.}

Now let $G$ be a compact Lie group and $U$ a projective unitary representation of $G$ on $\mathbb{C}^d$ with a class in the (differentiable) group cohomology ${\rm{H}}^{2}(G,\mathrm{U(1))}$ \cite{brylinski2000differentiable}. 
The induced \emph{on-site action} $\beta:G\curvearrowright \A_{\ZZ}$ is given by $\beta(g) = \bigotimes_{x\in \ZZ} \mathrm{Ad}(U_x(g))$
%\mathrm{Ad}\big(\bigotimes_{x\in\Lambda} U_x(g)\big)(A)$ for $A \in \A_\Lambda$, for every finite subset $\Lambda\subset \ZZ$, 
where $U_x(g)$ is a copy of $U(g)$ acting on the site $x$. 
The symmetry action of $G\times \ZZ$, where $G$ acts on site via $\beta$ and $\ZZ$ acts by lattice translations, is called \textit{anomalous} if the projective representation $U$ defines a nontrivial class $[U]\in \mathrm{H}^{2}(G,\mathrm{U}(1)).$
This anomaly is a property of the symmetry action itself and is therefore independent of the choice of state.
The following lemma is proved in \cite{Ogata2019LSM,Ogata_2021,kapustin2024anomalous}.

\begin{lemma}\label{lem:LSM}
    Suppose the symmetry $G\times \ZZ$ defined above is anomalous. 
    If a pure state $\omega$ on $\A_{\ZZ}$ is invariant under $G\times \ZZ$, then $\omega$ does not satisfy the split property.
\end{lemma}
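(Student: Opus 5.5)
The plan is to argue by contradiction, attaching an $\mathrm H^2(G,\mathrm U(1))$ index to each half-chain cut and comparing two adjacent cuts. Assume $\omega$ is a $G\times\ZZ$-invariant pure state with the split property, and let $R_k:=[k,\oo)$. By translation invariance, $M_{R_k}$ is a type I factor for every $k$, not only for $k=1$. Since $\omega\circ\beta(g)=\omega$, the on-site action $\beta(g)$ restricts to $\A_{R_k}$ and leaves $\omega|_{\A_{R_k}}$ invariant. Hence it is implemented by a unitary in the GNS representation $(\pi_{R_k},\H_{R_k},\Omega_{R_k})$ of this restricted state, and therefore extends to a normal automorphism $\Hat\beta_k(g)$ of the factor $\pi_{R_k}(\A_{R_k})''$.

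By the split property, $\pi_{R_k}(\A_{R_k})''$ is (isomorphic to) a type I factor $B(\mathcal K_k)$. Every automorphism of $B(\mathcal K_k)$ is inner, so $\Hat\beta_k(g)=\mathrm{Ad}(V_k(g))$ for a unitary $V_k(g)$ that is unique up to a phase. Consequently $g\mapsto V_k(g)$ is a projective representation, $V_k(g)V_k(h)=c_k(g,h)V_k(gh)$. I would take the index to be the class $[c_k]\in \mathrm H^2(G,\mathrm U(1))$, and check that it does not depend on the choices of phases or on the identification of the factor with $B(\mathcal K_k)$.

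\textbf{Step 1 (translation).} The translation $\tau$ maps $\A_{R_1}$ onto $\A_{R_2}$. It intertwines $\omega|_{\A_{R_1}}$ with $\omega|_{\A_{R_2}}$ because $\omega$ is translation invariant, and it intertwines the on-site $G$-actions because the action is the same on every site. Therefore $[c_1]=[c_2]$.

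\textbf{Step 2 (one-site decomposition).} Write $\A_{R_1}=M_d(\CC)\otimes\A_{R_2}$, so that $\pi_{R_1}(\A_{R_1})''\cong M_d(\CC)\otimes \pi'(\A_{R_2})''$. I would choose the implementers as $V_1(g)=U(g)\otimes V_2(g)$. The point to verify is that the quasi-equivalence class of $\omega|_{\A_{R_2}}$ seen inside $\H_{R_1}$ agrees with the one in $\H_{R_2}$. This holds because $\omega|_{\A_{R_1}}$ restricted to $\A_{R_2}$ is exactly $\omega|_{\A_{R_2}}$, and the factor generated there is a type I factor tensor factor. With this choice, the cocycle of $V_1$ is the product of the cocycle of $U$ and the cocycle of $V_2$, giving $[c_1]=[U]\cdot[c_2]$.

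\textbf{Conclusion.} Combining the two steps gives $[U]=[c_1][c_2]^{-1}=1$, which contradicts the assumption that the symmetry is anomalous.

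\textbf{Main obstacle: regularity.} Since $G$ is a compact Lie group, the anomaly is defined in differentiable (equivalently Borel-measurable) cohomology, so the phases of $V_k(g)$ must be chosen measurably in $g$. I would first show that $g\mapsto\Hat\beta_k(g)$ is continuous in the $u$-topology. This follows from pointwise norm-continuity of the on-site action on local elements together with normality. I would then invoke a measurable-selection result for the quotient $\mathcal U(\mathcal K_k)\to \mathrm{Aut}(B(\mathcal K_k))$ to obtain a Borel projective representation. Finally, I would use the identification of Borel and differentiable $\mathrm H^2(G,\mathrm U(1))$ for compact Lie groups, so that the triviality of $[U]$ obtained above is triviality in the sense used in the definition of anomaly. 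This is the route taken in \cite{Ogata2019LSM,Ogata_2021,kapustin2024anomalous}, which the lemma cites.
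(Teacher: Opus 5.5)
Your proof is correct and is essentially the standard argument of the works the paper cites for this lemma \cite{Ogata2019LSM,Ogata_2021}: define the $\mathrm{H}^2(G,\mathrm{U}(1))$ index of the type I half-chain algebra from the inner implementers of the on-site symmetry, then compare two cuts one site apart, using translation invariance once ($[c_1]=[c_2]$) and the one-site tensor splitting once ($[c_1]=[U][c_2]$). The paper gives no proof of its own beyond that citation, and your measurable choice of phases for the infinite-dimensional cocycle is the appropriate technical supplement when $G$ is a compact Lie group rather than a finite group.
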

Now we derive the improved LSM theorem:

\begin{theorem}\label{thm:LSM}
    % Let $G\times\ZZ$ be the symmetry group described above, with nontrivial $[\sigma]$, and let $\omega$ be a pure state. 
    Let $\omega$ be a pure state.
    Then the following conditions cannot hold simultaneously:
    \begin{enumerate}
        \item The symmetry $G\times\ZZ$ is anomalous. %, i.e., $[U]$ is nontrivial in ${\rm H^{2}}(G,\mathrm{U}(1))$.
        \item $\omega$ is translation-invariant and symmetric under the on-site action $\beta$.
        \item $\omega$ has vanishing entropy density, i.e., $s=0$.
        \item $\omega$ satisfies the clustering condition in Eq.~\eqref{eq:C_j-Matsui}.
    \end{enumerate}
\end{theorem}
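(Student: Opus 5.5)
The argument is a contradiction obtained by chaining Theorem~\ref{thm:1d}, Lemma~\ref{lem:matsui} and Lemma~\ref{lem:LSM}. Suppose, for contradiction, that a pure state $\omega$ satisfies conditions 1--4 simultaneously.

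The first step is to obtain half-chain Haag duality. By condition 2, $\omega$ is a translation-invariant pure state on $\A_\ZZ$. By condition 3, its entropy density vanishes, $s=0$. Theorem~\ref{thm:1d} then gives half-chain Haag duality in the GNS representation of $\omega$. We must check that this is the same notion of Haag duality that enters Lemma~\ref{lem:matsui}. Theorem~\ref{thm:1d} uses $A=(-\oo,0]$ and $B=(0,\oo)$. Lemma~\ref{lem:matsui} uses $L=(-\oo,0]$ and $R=[1,\oo)$. These are the same regions, so the statement $M_L = M_R'$ is exactly what Theorem~\ref{thm:1d} provides.

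The second step is to obtain the split property. We now have a translation-invariant pure state with half-chain Haag duality, and by condition 4 it satisfies the clustering condition \eqref{eq:C_j-Matsui}. These are precisely the hypotheses of Lemma~\ref{lem:matsui}, so $\omega$ has the split property: $M_L$, and hence $M_R$, is of type~I.

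The third step derives the contradiction. By condition 1 the symmetry $G\times\ZZ$ is anomalous. By condition 2, $\omega$ is invariant under both $\beta$ and translations, hence under the full $G\times\ZZ$ action. Lemma~\ref{lem:LSM} then says that $\omega$ does not have the split property, which contradicts the second step. Therefore conditions 1--4 cannot all hold.

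No step here is a real obstacle, since each is a direct application of a previously stated result. The only point needing care is bookkeeping: the half-chain conventions and the notion of Haag duality must match between Theorem~\ref{thm:1d} and Lemma~\ref{lem:matsui}, and the invariance hypothesis in Lemma~\ref{lem:LSM} must be exactly the joint $\beta$- and $\tau$-invariance assumed in condition 2. All the substantive work has already been done in Theorem~\ref{thm:1d}, which replaces Matsui's unproven Haag duality assumption with the checkable condition $s=0$.
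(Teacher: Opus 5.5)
Your proposal is correct and follows the paper's proof exactly: Theorem~\ref{thm:1d} turns $s=0$ into half-chain Haag duality, Lemma~\ref{lem:matsui} together with the clustering condition then gives the split property, and this contradicts Lemma~\ref{lem:LSM}. Your extra check that the half-chain conventions agree, $(0,\oo)\cap\ZZ=[1,\oo)\cap\ZZ$, is bookkeeping that the paper leaves implicit.
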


In particular, if the entropy density vanishes, then our LSM-type theorem says the clustering condition in Eq.~\eqref{eq:C_j-Matsui} must fail under these assumptions.
\begin{proof}[Proof of Theorem~\ref{thm:LSM}]
    Assume all properties are true.
    If $s = 0$, then $\omega$ satisfies half-chain Haag duality by Theorem~\ref{thm:1d}. 
    If, moreover, $\omega$ satisfies the clustering condition in Eq.~\eqref{eq:C_j-Matsui}, Lemma~\ref{lem:matsui} implies that $\omega$ satisfies the split property, contradicting Lemma~\ref{lem:LSM}.
\end{proof}

\section{A duality formula for the conditional entropy}
\label{sec:condEnt-duality}

We consider the mutual information and conditional entropy in cases where one system is a von Neumann algebra.
In this case, these entropic quantities are all defined in terms of the quantum relative entropy $D(\placeholder\|\placeholder)$, defined on pairs of normal positive linear functionals on von Neumann algebras \cite{Araki1976entropyI,ohya2004quantum}.
In this section, we derive a duality formula for the conditional entropy that we will need for our argument.

Let $M$ be a von Neumann algebra on a Hilbert space $\H$ and let $\F$ be a finite-dimensional Hilbert space.
We set $F= B(\F)$ and $d=\dim\F$.
Suppose $\Omega\in \F\ox\H$ is a unit vector.
We write $\omega_X = \ip{\Omega}{(\placeholder)\Omega}|_X$ and abbreviate $\omega_{F\ox1}$ as $\omega_F$ etc.
We define the conditional entropy as \cite{berta_quantum_2013}
\begin{align}
    % I(F : M)_\Omega &:= D(\omega_{F\ox M} \,\|\, \omega_F \ox \omega_M),\\[5.5pt]
    H(F \mid M)_\Omega 
    &:= - D(\omega_{F\ox M} \,\|\, \tr \ox\, \omega_M), % = I(F : M)_\Omega - \log(d),
\label{eq:def-condEnt}
\end{align}
where $\tr$ is the unnormalized trace on $F$.
We then have:

\begin{lemma}\label{lem:duality}
    Under the assumptions above,
    \begin{equation}\label{eq:abstract-condEnt-duality}
        H(F\mid M)_\Omega + H(F\mid M')_\Omega =0.
    \end{equation}
\end{lemma}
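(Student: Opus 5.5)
Throughout, let $(\placeholder)_{(F\ox1)'}$ denote the commutant taken on $\F\ox\H$. The conditional entropy $H(F\mid M')$ refers to the algebra $F\ox M'$, and $(F\ox M)' = 1\ox M'$ on $\F\ox\H$. The plan is to reduce the statement to the finite-dimensional duality $H(A|B)+H(A|C)=0$ for pure states on $ABC$. In infinite dimensions this is not directly available, so I will use a purification trick with a maximally entangled reference copy of $\F$. The key identity I aim to establish is
\begin{equation*}
H(F\mid M)_\Omega = -D(\omega_{F\ox M}\,\|\,\tr\ox\omega_M) = \log d - D\big(\omega_{F\ox M}\,\big\|\,\tfrac{1}{d}\tr\ox\omega_M\big).
\end{equation*}
I will interpret the right-hand side as a relative entropy between two states on $F\ox M$, and then compute it via Araki's relative modular operator. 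The point is that the relative modular operator for the pair $(F\ox M, \text{commutant})$ is the inverse of the one for the commutant pair with the roles of the two states swapped.

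Concretely, I introduce an auxiliary copy $\F'$ of $\F$ and a maximally entangled vector $\Phi\in\F\ox\F'$. I consider the enlarged Hilbert space $\F'\ox\F\ox\H$, with $\Omega$ sitting on $\F\ox\H$, and the vector $\Psi := \Phi_{\F'\F}\ox$ (something implementing $\omega_M$). More precisely, I will use the following standard fact. For a vector $\Omega$ and the algebras $N:=F\ox M$ and $N' = 1\ox M'$, Araki's relative entropy satisfies
\begin{equation*}
D(\omega_N\,\|\,\psi_N) = -\langle \Omega, \log\Delta_{\Psi,\Omega}\,\Omega\rangle,
\end{equation*}
where $\Psi$ is any vector representative of $\psi$ on $N$, and $\Delta_{\Psi,\Omega}^{-1} = \Delta'_{\Omega,\Psi}$ relative to $N'$. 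Using this, I will show that
\begin{equation*}
-D(\omega_{F\ox M}\|\tr\ox\omega_M) = D(\omega_{F\ox M'}\|\tr\ox\omega_{M'})
\end{equation*}
by computing the spatial derivative $d(\tr\ox\omega_M)/d\omega_{1\ox M'}$ à la Connes. The cleanest route is via Connes' spatial derivative. The relative entropy $D(\omega_N\|\varphi)$ for a vector state can be written as $\langle\Omega, \log(d\omega_{N'}/d\varphi)\Omega\rangle$ up to sign conventions, where $d\omega_{N'}/d\varphi$ is the spatial derivative of $\omega_{N'}$ with respect to $\varphi$ on $N$. With $\varphi = \tr\ox\omega_M$ on $F\ox M$ and $\omega_{N'}$ on $1\ox M'$, the spatial derivative factorizes. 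Symmetrically, for the other term one has $\varphi' = \tr\ox\omega_{M'}$ on $F\ox M'$ and commutant $1\ox M$. The inverse relation $d\varphi/d\psi' = (d\psi'/d\varphi)^{-1}$ should then give exactly the cancellation, since both spatial derivatives are built from the same pair of data $(\tr,\omega_M,\omega_{M'})$ after swapping roles. This uses $\Omega$ viewed alternately as a vector for $F\ox M'$ with commutant $1\ox M$.

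Alternatively, and perhaps more safely, I would cite the literature. Such duality is stated for general von Neumann algebras in the finite-dimensional-$F$ setting, e.g.\ by Berta et al.\ or in work on conditional entropy duality via the chain rule $H(F\mid M) = -\inf_\sigma D(\omega\|\,\id\ox\sigma)$. An approximation argument is also available: if $M$ is hyperfinite or approximable by finite type I subalgebras, the identity follows from the finite-dimensional duality and lower semicontinuity/monotone convergence of relative entropy (martingale convergence, Araki/Petz). In our application, $M$ is generated by increasing matrix algebras, which makes this feasible. In either route, the main obstacle is making the spatial derivative/relative modular computation rigorous without faithfulness of $\Omega$ (supports must be tracked). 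I would handle this by restricting to the support projections, which lie in $M$ and $M'$ respectively, and noting that the relative entropies are unaffected.
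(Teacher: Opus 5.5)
Your spatial-derivative route is correct and is, in substance, the paper's proof. The paper cites Hollands' identity $D(\omega_A\|\omega_B\circ\eps)+D(\omega_{B'}\|\omega_{A'}\circ\eps^{-1})=0$, which is exactly the spatial-derivative formula for $D$ combined with Connes' inversion theorem. It applies this to $\eps=\tau\ox\id$ with dual operator-valued weight $\eps^{-1}=d\,\tr\ox\id_{M'}$, and then tracks the resulting $\log d$ terms. Your factorization $\frac{d(\tr\ox\omega_M)}{d\omega_{M'}}=1_{\F}\ox\frac{d\omega_M}{d\omega_{M'}}=\frac{d\omega_M}{d(\tr\ox\omega_{M'})}$ is the concrete tensor-product form of the relation $\frac{d(\varphi\circ\eps)}{d\psi'}=\frac{d\varphi}{d(\psi'\circ\eps^{-1})}$ that defines $\eps^{-1}$. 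Using the unnormalized trace on both sides is why no constants appear in your version.

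Do not rely on your fallback arguments, though.

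\emph{The approximation route is circular.} Take $M=\bigvee_k M_k$ with finite type I factors $M_k$. Finite-dimensional duality and continuity under increasing exhaustion give $\lim_k H(F\mid M_k')=-H(F\mid M)$. But $M_k'$ \emph{decreases} to $M'$, and data processing then only yields $H(F\mid M')\ge\lim_k H(F\mid M_k')$, i.e.\ $H(F\mid M)+H(F\mid M')\ge 0$. The reverse inequality, equivalently continuity of the relative entropy along $M_k'\downarrow M'$, is the entire content of the lemma, and the Araki/Petz convergence results you cite cover only increasing nets. In the paper's application this would be the unproven claim $H(D_m\mid A_{m,n}\vee C_n)\to H(D_m\mid \Hat A_{m,\oo})$. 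Moreover, this route would only cover $M$ of that special form.

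\emph{The Araki formulation does not work as stated.} It needs a vector representative of $\tr\ox\omega_M$ in $\F\ox\H$, which need not exist: for $M=B(\H)$ its density $1_\F\ox\rho$ has rank at least $d$. Adjoining $\F'$ does not repair this, because it changes the commutant of $F\ox M$ to $B(\F')\ox 1\ox M'$. The spatial derivative is precisely what avoids both problems, so keep that argument and drop the others.
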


We will obtain Lemma~\ref{lem:duality} as an immediate consequence of \cite[Eq.~(14)]{HollandsVariational}, which follows immediately from the spatial derivative-based definition of the relative entropy \cite{ohya2004quantum} and Connes' duality theorem for spatial derivatives
\cite[Thm.~9]{connes_spatial_1980}.

\begin{proof}
    Set $A = F\ox M$ and $B = 1\ox M$ and consider the conditional expectation $\eps := \tau \ox \id : A\to B$, where $\tau= d^{-1}\tr$ is the tracial state on $F$.
    The entropy duality formula \cite[Eq.~(14)]{HollandsVariational} asserts
    \begin{equation}\label{eq:hollands-formula}
        D(\omega_A \,\|\,\omega_B\circ \eps ) + D(\omega_{B'} \,\|\,\omega_{A'}\circ \eps^{-1} ) = 0,
    \end{equation}
    where $\eps^{-1}:B'\to A'$ is the dual operator-valued weight \cite{Kosaki1986Index}.
    In our case, it is given by
    \begin{equation}
        \eps^{-1} = d \cdot(\tr \ox\, \id_{M'}) = d^2 \cdot (\tau\ox \id_{M'}).
    \end{equation}
    Hence, plugging in the formulas for $\eps$ and $\eps^{-1}$, eq.~\eqref{eq:hollands-formula} gives
    \begin{equation}
        D(\omega_{F\ox M} \,\|\, \tau \ox \omega_M) + D(\omega_{F\ox M'}\,\|\, \tau \ox \omega_{M'}) - 2\log(d)= 0.
    \end{equation}
    Since $\tr = d\tau$, we have $D(\omega_{F\ox X} \,\|\, \tau \ox \omega_X) -\log (d) = D(\omega_{F\ox X} \,\|\,\tr \ox\omega_X) = -H(F\mid X)_\Omega$, so the claim is shown.
\end{proof}

% Following \cite{ohya2004quantum}, the quantum relative entropy defining the conditional entropy is given by
% \begin{equation}\label{eq:spatial-derivative-def}
%     D(\omega_{F\ox M} \,\|\, \Tr \ox \omega_M)
%     =
%     \ip\Omega{ \log \frac{d (\tr\ox \omega_M) }{d \omega_{M'}}\Omega},
% \end{equation}
% where $\frac{d \phi_X}{d\psi_{X'}}$ denotes Connes' spatial derivative and where $\phi_X$ and $\psi_{X'}$ are normal positive linear functionals on $X,X'$, respectively (here, $X=F\ox M$ and $X'=1\ox M'$).
% As noted in \cite{ohya2004quantum}, \eqref{eq:spatial-derivative-def} is valid in any representation, without faithfulness assumptions on the states involved.

% Now suppose $L$ and $R$ are commuting factors on $\H$.

% \begin{lemma}
%     \begin{equation}
%         H(F\mid L)_\Omega + H(F\mid R)_\Omega \le  \log [L:R']_0.
%     \end{equation}
% \end{lemma}

\section{An abstract entropic characterization of Haag duality}\label{sec:proof-abstract}

Let $\H$ be a Hilbert space and let $A,B \subset B(\H)$ be commuting von Neumann algebras.
We assume that $A$ and $B$ are generated by increasing sequences of finite-dimensional matrix algebras $(A_n)_n$ and $(B_n)_n$ 
\begin{equation}
    A = \bigvee_n A_n,     
    \qquad\text{and}\qquad
    B = \bigvee_n B_n.
\end{equation}
Moreover, we assume that $A\vee B= B(\H)$.
Equivalently, $A$ and $B$ are tomographically complete \cite{van_luijk_uniqueness_2026}: A normal state on $B(\H)$ is uniquely determined by the $A|B$ correlations it induces, i.e., if $\omega_1(ab)=\omega_2(ab)$ for all $a=a^*\in A$, $b=b^*\in B$, then $\omega_1=\omega_2$.
As a consequence of tomographic completeness, $A$ and $B$ are factors, i.e., $A\cap A'=\CC=B\cap B'$.\footnote{
Indeed, $A\cap A' \subset B' \cap A' = (A\vee B)' = B(\H)'=\CC$ and similarly for $B$.}
We introduce the following notation:
\begin{equation}
    D_n := A_n\vee B_n \cong A_n\otimes B_n ,
    \qquad 
    C_n := D_n' = A_n'\cap B_n'. 
\end{equation}
Note that $D_n$ is a finite type I factor, whereas $C_n$ is type I$_\oo$ (assuming $\H$ is infinite dimensional).
Our notation is inspired by the geometry in Fig.~\ref{fig:2D-case}, where the algebras $D_n$ correspond to a sequence of disks with increasing radii.

Let $m\le n$. Since $A$ and $B$ commute, an operator $x\in A_n$ commutes with $A_m$ if and only if it commutes with $A_m\vee B_m=D_m$ or, equivalently, is contained in $C_m$.
The analogous statement holds when $A$ and $B$ are swapped.
Thus,
\begin{equation}\label{eq:relative-comm}
    A_{m,n} := A_m' \cap A_n = C_m \cap A_n,
    \qquad
    B_{m,n} := B_m' \cap B_n = C_m \cap B_n.
\end{equation}
For all $m \le n<\oo$, the four pairwise commuting type I factors $A_{m,n}, B_{m,n}, D_m, C_n$ jointly generate $B(\H)$:
\begin{equation}\label{eq:4-partite-system}
    A_{m,n} \vee B_{m,n} \vee C_n\vee D_m =B(\H).
\end{equation}
Indeed, $A_n = A_m \vee A_{m,n}$ and $B_n= B_m\vee B_{m,n}$, so that
% \begin{equation}
    $D_n = A_n\vee B_n 
    = A_m \vee A_{m,n} \vee B_m\vee B_{m,n} = D_m \vee A_{m,n} \vee B_{m,n}.$
% \end{equation}
Since $B(\H) = C_n \vee D_n$, this shows \eqref{eq:4-partite-system}.
Therefore, we can define the conditional mutual information in the usual way.
For a pure state $\Omega\in \H$, one has
\begin{equation}
    I(D_m : C_n \mid A_{m,n})_\Omega 
    = S(D_m \vee A_{m,n}) + S(C_n \vee A_{m,n}) - S(A_{m,n}) - S(D_m\vee  C_n\vee  A_{m,n}),
\end{equation}
where $S(X)$ denotes the von Neumann entropy of the reduced state $\omega_X := \ip\Omega{(\,\cdot\,)\Omega}|_X$.
We can now state our abstract Haag duality theorem:

\begin{theorem}\label{thm:abstract}
    The following are equivalent:
    \begin{enumerate}[(1)]
        \item\label{it:abstract1} Haag duality $A=B'$;
        \item\label{it:abstract2} 
        For every unit vector $\Omega\in \H$ and every $m$, $\lim_{n\to\oo} I(D_m : C_n \mid A_{m,n} )_\Omega =0$;
        \item\label{it:abstract3} For some unit vector $\Omega\in\H$, $\lim_{m\to\oo} \lim_{n\to\oo}\ 
        I(D_m : C_n \mid A_{m,n} )_\Omega =0$.
    \end{enumerate}
\end{theorem}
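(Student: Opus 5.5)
We first rewrite the conditional mutual information as a sum of conditional entropies. Since all of $D_m$, $A_{m,n}$, $B_{m,n}$ are finite type I factors and $\Omega$ is pure, the chain rule $S(D_m\vee X)-S(X) = H(D_m\mid X)_\Omega$ for finite-dimensional $X$ gives
\begin{equation*}
    I(D_m : C_n \mid A_{m,n})_\Omega = H(D_m\mid A_{m,n})_\Omega + H(D_m\mid B_{m,n})_\Omega .
\end{equation*}
Since $D_m\vee X = D_m\otimes X$ with $D_m = B(\F)$, $\F$ finite-dimensional, we may apply the notation of Section~\ref{sec:condEnt-duality} with $\H\cong\F\otimes\H_m$, where $X$ acts on $\H_m$ (the commutant $C_m$ of $D_m$ is identified with $B(\H_m)$). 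By monotonicity of the conditional entropy under enlarging the conditioning algebra (data processing for $D(\omega_{F\otimes M}\|\tr\otimes\omega_M)$), the terms are decreasing in $n$. By lower semicontinuity/martingale convergence of relative entropy along increasing nets of algebras, we get $\lim_n H(D_m\mid A_{m,n}) = H(D_m\mid A_{m,\oo})$ with $A_{m,\oo}:=C_m\cap A$ (generated by the $A_{m,n}$; this needs a short check that $\bigvee_n (A_m'\cap A_n)= A_m'\cap A$, which holds since $A_n = A_m\otimes A_{m,n}$ and a conditional-expectation argument). Similarly for $B$. Hence
\begin{equation*}
    \lim_{n\to\oo} I(D_m : C_n \mid A_{m,n})_\Omega = H(D_m\mid A_{m,\oo})_\Omega + H(D_m\mid B_{m,\oo})_\Omega .
\end{equation*}

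(1)$\Rightarrow$(2): Inside $B(\H_m)$, Haag duality $A=B'$ implies $A_{m,\oo}$ and $B_{m,\oo}$ are commutants of each other in $C_m$: indeed $(A_m\vee B_m)'\cap B' = C_m\cap A$. Then Lemma~\ref{lem:duality} with $M=A_{m,\oo}$, $M'=B_{m,\oo}$ gives the sum is zero for every $\Omega$ and $m$. (2)$\Rightarrow$(3) is trivial.

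(3)$\Rightarrow$(1) is the main obstacle. Without Haag duality we only have $B_{m,\oo}\subset A_{m,\oo}'$ (relative to $C_m$), so by Lemma~\ref{lem:duality} and monotonicity the limit equals $H(D_m\mid B_{m,\oo}) - H(D_m\mid A_{m,\oo}')\ge 0$, i.e.\ the defect in data processing of $D(\omega_{F\otimes M}\|\tr\otimes\omega_M)$ under the restriction from $M=A_{m,\oo}'$ to $N=B_{m,\oo}$. If this tends to $0$ as $m\to\oo$, the approximate Petz recovery theorem of Appendix~\ref{sec:appendix-recovery} yields a recovery channel $\mathcal R_m: F\otimes M\to F\otimes N$ (fixed on $F$) reproducing $\omega_{F\otimes M}$ up to error vanishing with $m$. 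The plan is to use this, together with purity and the maximally entangled-type structure of $\omega$ between $D_m$ and its complement, to show that every $x\in B'$ is a weak limit of elements in $A$: take $x\in B'$, compress it into $D_m\otimes A'_{m,\oo}$-pieces, and use the recovery map to transport correlations so that $x\Omega$ is approximated by $a_m\Omega$ with $a_m\in A$, uniformly for suitable vectors; then since $\Omega$ can be taken cyclic-like via $B$ (commuting with both), conclude $x\in A''=A$. The delicate points are controlling the recovery error in a norm strong enough to pass to operator approximation and handling non-faithful $\Omega$; I would first try a Uhlmann/fidelity formulation where small data-processing defect gives fidelity close to one between $\Omega$ and a vector produced by the recovered state, then use uniqueness of purifications to get partial isometries in $A$ implementing $x$ approximately.
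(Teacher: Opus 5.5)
Your reduction of the limit to $H(D_m\mid A_{m,\oo})_\Omega+H(D_m\mid B_{m,\oo})_\Omega$ matches the paper, as do (1)$\Rightarrow$(2) and recasting the limit as a data-processing defect. You apply Lemma~\ref{lem:duality} to $A_{m,\oo}$ rather than $B_{m,\oo}$, which is just the mirror image. The gap is (3)$\Rightarrow$(1), which is the substance of the theorem. You leave it as a plan, and the pieces do not fit together.

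Your recovery map sends $C_m\cap A'$ into $B_{m,\oo}$, so it can only produce elements of $B$. Yet you try to approximate $x\in B'$ by elements of $A$ through ``$D_m\otimes A_{m,\oo}'$-pieces'' of $x$. But $B'$ decomposes as $A_m\vee(C_m\cap B')$ and is in general not contained in $D_m\vee(C_m\cap A')=A_m\vee A'$, so there is nothing for that map to act on. The density step is also misjustified, since $\Omega$ need not be cyclic for $B$. The Uhlmann route is both unspecified and risky. The principle that vectors with equal marginals on $B$ are related by unitaries in $A$ is, as recalled in the introduction, equivalent to Haag duality itself, so it cannot be invoked here.

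The missing idea is that no fidelity argument is needed. The trace-norm bound from Lemma~\ref{lem:subunital-approx-recovery} is already strong enough, because the test vectors can be generated by $D_m$, the factor the recovery map leaves untouched.

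Set it up consistently. Let $\Hat A=B'$ and $\Hat A_{m,\oo}=C_m\cap\Hat A$, which is the commutant of $B_{m,\oo}$ inside $C_m$. Then the limit equals $I(D_m:\Hat A_{m,\oo})_\Omega-I(D_m:A_{m,\oo})_\Omega$. The lemma gives a normal subunital CP map $\alpha_m:\Hat A_{m,\oo}\to A_{m,\oo}$ with $\delta_m:=\|\omega_{D_m\vee\Hat A_{m,\oo}}-\omega_{D_m\vee A_{m,\oo}}\circ(\id_{D_m}\ox\alpha_m)\|\to0$.

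Since $A_m$ is a type I factor, $\Hat A=A_m\vee\Hat A_{m,\oo}$. So for $x\in\Hat A$ you can set $x_m:=(\id_{A_m}\ox\alpha_m)(x)\in A_m\vee A_{m,\oo}=A$, with $\|x_m\|\le\|x\|$. For $a,b\in D_k$ and $m\ge k$, the bimodule property of $\id_{D_m}\ox\alpha_m$ gives $a^*x_mb=(\id_{D_m}\ox\alpha_m)(a^*xb)$. Hence $|\langle a\Omega,(x-x_m)b\Omega\rangle|\le\delta_m\|a^*xb\|\to0$.

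Because $\bigvee_kD_k=A\vee B=B(\H)$, the set $\bigcup_kD_k\Omega$ is dense for every unit vector $\Omega$. With the uniform norm bound, $x_m\to x$ weakly, so $x\in A$. In your mirrored setup the identical argument approximates $y\in A'$ by elements of $B$, giving $A'=B$ and thus $A=B'$.
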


Like Haag duality, these entropic criteria are obviously invariant under exchanging $A\leftrightarrow B$.
Indeed, purity of the global state and \eqref{eq:4-partite-system} imply the duality relation
\begin{equation}
     I(D_m : C_n \mid A_{m,n})_\Omega = I(D_m : C_n \mid B_{m,n})_\Omega.
\end{equation}

For the proof of Theorem~\ref{thm:abstract}, we need some preparation.
We begin with the following Lemma.
We set $A_{m,\oo}:= A_m'\cap A = A \cap C_m$ and $B_{m,\oo} := B_m'\cap B = B\cap C_m$.
These equalities follow in the same way as in \eqref{eq:relative-comm}.

\begin{lemma}\label{lem:CMI-limit}
    Under the assumptions of Theorem~\ref{thm:abstract}, we have, for each $m$,
    \begin{equation}\label{eq:CMI-limit}
        \lim_{n\to\oo} I(D_m:C_n \mid A_{m,n})_\Omega
        = H( D_m \mid A_{m,\oo})_\Omega + H( D_m \mid B_{m,\oo})_\Omega.
    \end{equation}
\end{lemma}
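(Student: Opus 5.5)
The plan is to rewrite the conditional mutual information as a sum of two finite-dimensional conditional entropies and then pass to the limit $n\to\oo$ in each term separately. Using the definition \eqref{eq:CMI-entropy-rewrite},
\begin{equation*}
    I(D_m : C_n\mid A_{m,n})_\Omega = \big[S(D_m\vee A_{m,n}) - S(A_{m,n})\big] + \big[S(D_m\vee B_{m,n}) - S(B_{m,n})\big].
\end{equation*}
Since $D_m$ is a finite type I factor commuting with $A_{m,n}$, the bracket $S(D_m\vee A_{m,n})-S(A_{m,n})$ is the ordinary finite-dimensional conditional entropy $H(D_m\mid A_{m,n})_\Omega$. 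This identification uses $D_m\vee A_{m,n}\cong D_m\ox A_{m,n}$ and checks that definition \eqref{eq:def-condEnt} reduces to the entropy difference in finite dimensions.

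Concretely, I would fix the identification $\H\cong \F\ox\K$ with $D_m = B(\F)\ox 1$ and $C_m = 1\ox B(\K)$. Then $A_{m,n}$, $B_{m,n}$, $A_{m,\oo}$ and $B_{m,\oo}$ all sit inside $1\ox B(\K)$, and $H(D_m\mid X)_\Omega = -D(\omega_{F\ox X}\,\|\,\tr\ox\omega_X)$ fits the framework of Section~\ref{sec:condEnt-duality}. So it suffices to show
\begin{equation*}
    \lim_{n\to\oo} H(D_m\mid A_{m,n})_\Omega = H(D_m\mid A_{m,\oo})_\Omega,
\end{equation*}
and likewise for $B$.

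The key input is that $A_{m,n}$ increases in $n$ and generates $A_{m,\oo}$. Indeed, $A_n = A_m\vee A_{m,n}$ with $A_m$ a finite factor, so $A_n\cong A_m\ox A_{m,n}$ and $A = A_m\vee\bigvee_n A_{m,n}$. Taking relative commutants with $A_m$ (a finite type I factor inside $A$) gives $A_{m,\oo}=\bigvee_n A_{m,n}$.

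Then $F\ox A_{m,n}$ increases to $F\ox A_{m,\oo}$. I would invoke the martingale (monotone) continuity of Araki's relative entropy along increasing nets of von Neumann subalgebras whose union generates: $D(\phi|_{N_n}\|\psi|_{N_n})\nearrow D(\phi|_N\|\psi|_N)$ whenever $N_n\nearrow N$. This holds for normal positive functionals, with $\tr\ox\omega$ normal since $F$ is finite-dimensional; see e.g.\ \cite{ohya2004quantum}. Applying it with $\phi=\omega$ and $\psi = \tr\ox\omega$ on $N_n=F\ox A_{m,n}$ yields the limit.

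The main obstacle is finiteness of the limit, or at least compatibility of signs, so that the limits can be added. Here finite dimensionality of $F$ helps: $-\log d\le H(D_m\mid X)\le \log d$ for any $X$. The lower bound follows from monotonicity (data processing) together with $D(\omega_{F\ox X}\|\tr\ox\omega_X)\le D(\omega_{F\ox B(\K)}\|\tr\ox\omega_{B(\K)})\le\log d$ for the pure-state case. The upper bound follows from $D\ge -\log d$, comparing $\omega_{F\ox X}$ with $\tr\ox\omega_X$ of mass $d$. Hence both sequences converge to finite values and the sum of the limits equals the limit of the sums, proving \eqref{eq:CMI-limit}. The same argument applies verbatim to $B$, since $B_{m,n}\nearrow B_{m,\oo}$.
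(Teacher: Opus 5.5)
Your proposal is correct and follows essentially the same route as the paper: split the conditional mutual information into $H(D_m\mid A_{m,n})_\Omega + H(D_m\mid B_{m,n})_\Omega$. Then pass to the limit $n\to\oo$ using continuity of the relative entropy under exhaustion, together with $A_{m,\oo}=\bigvee_n A_{m,n}$ and $B_{m,\oo}=\bigvee_n B_{m,n}$. The paper leaves implicit the two details you add, namely the relative-commutant argument for $A_{m,\oo}=\bigvee_n A_{m,n}$ and the bounds $-\log d\le H(D_m\mid X)_\Omega\le \log d$ that make the limits finite, and your arguments for both are valid.
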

\begin{proof}
    Since the global state is pure, since $D_m \vee A_{m,n}\vee B_{m,n} \vee C_n = B(\H)$, and since $A_{m,n} \vee B_{m,n} = (D_m\vee C_n)'$, we can write the conditional mutual information as a sum of conditional entropies
    \begin{align}\label{eq:sum-condEnt-equals-CMI}
        I(D_m : C_n \mid A_{m,n})_\Omega 
        &=  H(D_m \mid A_{m,n})_\Omega + H(D_m \mid B_{m,n})_\Omega.
    \end{align}
    For fixed $m$, we have
    \begin{equation}
        A_{m,\oo} = \bigvee_n A_{m,n}, \qquad B_{m,\oo} = \bigvee_n B_{m,n}.
    \end{equation}
    Recall that the conditional entropy is defined in terms of the quantum relative entropy $D(\placeholder\|\placeholder)$, see \eqref{eq:def-condEnt}.
    Therefore, the continuity of the quantum relative entropy under exhaustion \cite[Cor.~II.5.12]{ohya2004quantum} implies that the right-hand side of \eqref{eq:sum-condEnt-equals-CMI}
    approaches $H(D_m \mid A_{m,\oo})_\Omega + H(D_m \mid B_{m,\oo})_\Omega$ as $n\to\oo$.
\end{proof}

We introduce the notation
\begin{equation}
    \Hat A := B'.
\end{equation}
Note that $A\subset \Hat A$ and that Haag duality is equivalent to  $A=\Hat A$.
We denote the relative commutant of the inclusion $A_m\subset \Hat A$ by $\Hat A_{m,\oo} := A_m'\cap \Hat A$.
By the argument just before \eqref{eq:relative-comm}, and because $A_m$ is a type I factor, we have
\begin{equation}\label{eq:hat-well-behaved}
    \Hat A_{m,\oo} = C_m \cap \Hat A, 
    \qquad
    \Hat A = A_m \vee \Hat A_{m,\oo}.
\end{equation}
A helpful way to think of $\Hat A$ is given by the following characterization:
\begin{equation}
    \Hat A =B' = \bigg(\bigvee_n \,B_n\bigg)' =  \bigcap_n \,B_n',
\end{equation}
which says that $\Hat A$ contains every operator localized outside of every finite approximation of $B$.
Since $B_n' = (B_n'\cap D_n) \vee D_n' = A_n\vee C_n$, we can also write $\Hat A$ as $\Hat A =\cap_n (A_n\vee C_n)$.

We now consider the conditional entropy.
The duality formula for the conditional entropy, Lemma~\ref{lem:duality}, and eq.~\eqref{eq:hat-well-behaved} give
\begin{equation}\label{eq:condEnt-duality-applied}
    H(D_m \mid  B_{m,\oo})_\Omega =  - H(D_m \mid \Hat A_{m,\oo})_\Omega .
\end{equation}
Therefore, we can write the right-hand side of \eqref{eq:CMI-limit} as
\begin{equation}\label{eq:condEnt-difference}
    H(D_m \mid A_{m,\oo})_\Omega - H(D_m \mid \Hat A_{m,\oo})_\Omega.
\end{equation}
The entropic condition in Theorem~\ref{thm:abstract} is simply that \eqref{eq:condEnt-difference} vanishes as $m\to \oo$.
To understand this better, we relate \eqref{eq:condEnt-difference} to the mutual information. 
If $X\subset C_m$ is a von Neumann algebra, e.g., $A_{m,\oo}$ or $\Hat A_{m,\oo}$, the mutual information is defined as
\begin{equation}\label{eq:MI-def}
    I(D_m : X)_\Omega := D(\omega_{D_m \vee X} \,\|\,\omega_{D_m}\ox \omega_{X}).
\end{equation}
It relates to the conditional entropy via%
\footnote{
Indeed, one has
    $I(D_m:X)_\Omega
    = D(\omega_{D_m\vee X} \,\|\, \omega_{D_m}\ox\,\omega_X)
    = D(\omega_{D_m\vee X} \,\|\, \tr \ox\,\omega_X) - D(\omega_{D_m}\|\tr) = S(D_m)_\Omega - H(D_m\mid X)_\Omega$.
}
\begin{equation}
    I(D_m : X)_\Omega = S(D_m)_\Omega - H(D_m \mid X)_\Omega.
\end{equation}
Therefore, using the mutual information, one can write the difference of conditional entropies as
\begin{equation}\label{eq:condEnt-difference-DPI-defect}
    H(D_m \mid A_{m,\oo})_\Omega - H(D_m \mid \Hat A_{m,\oo})_\Omega
    =
    I(D_m : \Hat A_{m,\oo})_\Omega - I(D_m : A_{m,\oo})_\Omega.
\end{equation}
We know that this quantity is non-negative because it equals the limit of the conditional mutual information.
Another way to see the non-negativity is via the data-processing inequality \cite[Thm.~II.5.3]{ohya2004quantum} for the inclusion $D_m\vee A_{m,\oo}\subset D_m\vee \Hat A_{m,\oo}$.
Hence, if the conditional mutual information goes to zero as $n\to\oo$ followed by $m\to\oo$, then the defect in the data-processing inequality goes to zero as $m$ becomes large.
This allows us to apply an approximate recovery theorem, first shown in \cite{junge_universal_2018} and extended to von Neumann algebras in \cite{faulkner_approximate_2022,FaulknerHollands}.
We cannot immediately apply the theorem in \cite{faulkner_approximate_2022} because we cannot assume faithfulness of the marginal states of $\Omega$.
In Appendix~\ref{sec:appendix-recovery}, we derive the approximate recovery theorem as we need it, by reducing the statement to the faithful case and  then applying \cite{FaulknerHollands}.
Let 
\begin{equation}
    \alpha_m: \Hat A_{m,\oo} \to A_{m,\oo}
\end{equation}
be the modified Petz dual map of the inclusion $A_{m,\oo} \subset \Hat A_{m,\oo}$ with respect to the state $\omega|_{\Hat A_{m,\oo}}$, as defined in eq.~\eqref{eq:subunital-modified-recovery} of Appendix~\ref{sec:appendix-recovery}.
Then $\alpha_m$ is a normal subunital completely positive map with $\alpha_m(1) = \supp(\omega_{A_{m,\oo}})$.
Applying the approximate recovery theorem in  Lemma~\ref{lem:subunital-approx-recovery} to $\chi = \omega_{D_{m}\vee \Hat A_{m,\oo}}$, $\psi=\omega_{D_m}$ and $\phi = \omega_{\Hat A_{m,\oo}}$, gives
\begin{equation}\label{eq:approx-recovery}
    I(D_m:\Hat A_{m,\oo})_\Omega - I(D_m:A_{m,\oo})_\Omega \ge \frac14\,
    \|\omega_{D_m\vee \Hat A_{m,\oo}}- \omega_{D_m\vee A_{m,\oo}}\circ (\id_{D_m} \otimes \alpha_m)\|^2 .
\end{equation}
Note that Lemma~\ref{lem:subunital-approx-recovery} may be applied here since $\supp(\omega_{D_m\vee \Hat A_{m,\oo}}) \le \supp(\omega_{{D_m} \ox \omega_{\Hat A_{m,\oo}}}) =\supp(\omega_{D_m}) \cdot \supp(\omega_{\Hat A_{m,\oo}})$ and since $I(D_m : \Hat A_{m,\oo}) <\oo$.
The latter holds because $D_m$ is finite-dimensional.

\begin{proof}[Proof of Theorem \ref{thm:abstract}]
The implication \ref{it:abstract2} $\Rightarrow$ \ref{it:abstract3} is trivial.

\ref{it:abstract1} $\Rightarrow$ \ref{it:abstract2}:
Suppose Haag duality holds.
Then, by \eqref{eq:hat-well-behaved}, we also have $\Hat A_{m,\oo} = A_{m,\oo}$.
Therefore, \eqref{eq:condEnt-duality-applied} gives
\begin{equation}\label{eq:2D-sum-condEnt-infinite}
    H(D_m \mid A_{m,\oo} )_\Omega + H(D_m \mid  B_{m,\oo})_\Omega = 0,
\end{equation}
for all $m$.
By Lemma~\ref{lem:CMI-limit}, this shows the claim: $\lim_{n\to\oo} I(D_m:C_n \mid A_{m,n})_\Omega =0$ for each $m$.

\ref{it:abstract3} $\Rightarrow$ \ref{it:abstract1}:
By Lemma~\ref{lem:CMI-limit}, the assumption gives 
\begin{equation}
    \lim_{m\to\oo} \ H(D_m \mid A_{m,\oo})_\Omega + H(D_m \mid B_{m,\oo})_\Omega =0.
\end{equation}
As explained above, see eqs.~\eqref{eq:condEnt-duality-applied} to \eqref{eq:approx-recovery}, 
this implies
\begin{equation}\label{eq:approx-recovery-applied}
    \lim_{m\to\oo} \ \| \omega_{D_m \vee \Hat A_{m,\oo}} - \omega_{D_m \vee  A_{m,\oo}} \circ (\id_{D_m} \ox \alpha_m) \| =0,
\end{equation}
where $\alpha_m : \Hat A_{m,\oo} \to A_{m,\oo} $ is the modified Petz recovery map.
We will use \eqref{eq:approx-recovery-applied} to prove that a general element $x\in \Hat A$ is the weak limit of elements in $A$.
For $x \in \Hat A$, we set 
\begin{equation}
    x_m := (\id_{A_m}\ox \alpha_m)(x) \in A = A_m \vee A_{m,\oo},
\end{equation}
where we used \eqref{eq:hat-well-behaved}.
Note that $\|x_m\|\le \|x\|$ is uniformly bounded.
Thus, when checking weak convergence $x_m \to x$, it is enough to consider matrix elements from the dense subspace $\cup_{k} D_k\Omega$.
For $a,b \in D_k$ and $m\ge k$, we have
\begin{align*}
    | \langle a\Omega, (x - x_m)b\Omega\rangle |
    &= \big|\, \big\langle\Omega, \big( a^*xb - a^* (\id_{A_m} \ox\alpha_m)(x)b\big) \Omega\big\rangle \,\big|\\
    &= \big|\, \big\langle\Omega, \big( a^*xb - (\id_{D_m} \ox\alpha_m)(a^*xb)\big) \Omega\big\rangle\,\big| \\
    &= \big|\, \omega_{D_m\vee \Hat A_{m,\oo}}(a^*xb) - \omega_{D_m\vee  A_{m,\oo}} \circ (\id_{D_m}\ox \alpha_m) (a^*xb) \,\big|\\
    &\le
    \|\omega_{D_m \vee \Hat A_{m,\oo}}- \omega_{D_m \vee A_{m,\oo}}  \circ (\id\ox\alpha_m) \| \, \|a^*xb\|.
\end{align*}
By \eqref{eq:approx-recovery-applied}, the right-hand side goes to zero, showing that $x_m\to x$ weakly.
Since each $x_m$ is in $A$, the fact that $A$ is weakly closed shows $x\in A$.
Given that $x$ was arbitrary, this proves $A=\Hat A =B'$.

\end{proof}

The paper \cite{faulkner_asymptotically_2022} also proves an entropic criterion for Haag duality.
Their criterion uses smooth relative entropies and requires checking convergence on a dense subset of all states, whereas our criterion only involves a single state, see \cite[Thm.~2]{faulkner_asymptotically_2022}, in the context of holography. 
Their proof is different from ours, although it also builds on Petz's work on sufficiency in von Neumann algebras.

\paragraph{Acknowledgements.}
We thank Alexander Stottmeister, Daniel Wallick, Henrik Wilming, and Jinmin Yi for several fruitful discussions.
% We thank Stefan Hollands for a conversation about approximate recovery with non-faithful states. 

\paragraph{Author contributions and AI usage.}
AI played an important role in the discovery of the proof of the main theorem. 
The paper is entirely written by the authors. 
The two authors contributed equally and the author list is ordered alphabetically.
Below is a detailed account of the scientific contributions.

This work started as a project of RL and Jinmin Yi about translation-invariant pure states on spin chains, where AI found that Haag duality implies that the entropy density vanishes. 
Using RL's idea to use Petz recovery, it was then able to prove the converse, yielding Theorem~\ref{thm:1d}.
RL invited LvL to join the project, and the two of us worked through the proof in detail, which led to several improvements in the proof strategy, e.g., we introduced the conditional entropy formula \eqref{eq:abstract-condEnt-duality}, which allowed us to cut out a large section of the proof.
Analyzing the argument further, we found that a similar proof strategy applies to cones in 2D spin systems. 
LvL then abstracted the argument, yielding the purely algebraic version, which is now Theorem~\ref{introthm:abstract}. 
Apart from Theorem~\ref{thm:1d}, all applications to quantum spin systems were discovered and worked out by us.
Jinmin decided he did not want to be included as an author, feeling that he had not contributed much to the project.
Moreover, AI was used to proofread the manuscript. It found some mistakes in our arguments, which we then corrected.
AI was also used to create the figures.

\paragraph{Funding.}
Research at Perimeter Institute and the University of Waterloo is supported in part by the Government of Canada through the Department of Innovation, Science and Economic Development and by the Province of Ontario through the Ministry of Colleges and Universities.
RL is also supported by the Simons Collaboration on Global Categorical Symmetries through Simons Foundation grant 888996.

\null

\appendix
\numberwithin{equation}{section}

\section{Approximate Petz recovery without faithfulness}
\label{sec:appendix-recovery}

In this appendix we use the approximate recovery theorem in \cite{FaulknerHollands} to obtain an approximate recovery result without any faithfulness assumptions.
The earlier paper \cite{faulkner_approximate_2022} already discusses the non-faithful case, but their reduction to the faithful case seems to be incomplete. 

We consider $\sigma$-finite von Neumann algebras $N\subset M$ and a normal state $\phi$ on $M$.
We denote the support projections on $M$ and $N$ by $p = \supp(\phi)$ and $q = \supp(\phi|_N)$.
Then $p\le q$. 
We consider the corners $M_0:=pMp$ and $N_0:=qNq$.
Note that the restrictions $\phi|_{M_0}$ and $\phi|_{N_0}$ are faithful.
The inclusion $\iota:N\hookrightarrow M$ induces a normal unital completely positive map $\iota_0:N_0\to M_0$, $a\mapsto pap$.
Note that 
\begin{equation}
    (\phi|_{M_0} \circ \iota_0) (a) 
    = \phi(pap) = \phi(qaq)
    = \phi|_{N_0}(a), \qquad a\in N_0.
\end{equation}
Thus, we have a state-preserving normal unital completely positive map $\iota_0: (N_0, \phi|_{N_0}) \to (M_0,\phi|_{M_0})$.
Its modified Petz dual map \cite{junge_universal_2018,FaulknerHollands}, defined as an average of rotated Petz dual maps \cite[Eq.~(19)]{FaulknerHollands}, is a state-preserving normal unital completely positive map
\begin{equation}
    \alpha_0 : (M_0,\phi|_{M_0}) \to (N_0, \phi|_{N_0}).
\end{equation}
We use it to define a state-preserving normal completely positive subunital map
\begin{equation}\label{eq:subunital-modified-recovery}
    \alpha:= \alpha_0(p(\,\cdot\,)p) : (M,\phi) \to (N,\phi|_N), \qquad \alpha(1) = q.
\end{equation}
As a consequence of the approximate recovery theorem \cite{FaulknerHollands}, we have:

\begin{lemma}\label{lem:subunital-approx-recovery}
    Let $\psi$ be a normal state on a $\sigma$-finite von Neumann algebra $P$, and let $\chi$ be a normal state on $P\bar\otimes M$.
    If $\supp(\chi) \le \supp(\psi\ox\phi)$ and $D(\chi \| \psi \ox \phi)<\oo$,
    then
    \begin{equation}\label{eq:abstract-approx-rec}
        D(\chi \,\|\, \psi \ox \phi) - D(\chi|_{P\bar\otimes N} \,\|\, \psi \ox \phi|_N) \ge
        \frac14 \| \chi - \chi|_{P\bar\otimes N} \circ (\id_P\ox\alpha) \|^2.
    \end{equation}
\end{lemma}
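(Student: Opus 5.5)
We reduce the statement to the faithful approximate recovery theorem of \cite{FaulknerHollands}, applied to the corner inclusion $\iota_0:N_0\to M_0$ tensored with $P$. Set $r:=\supp(\psi)\in P$ and $P_0:=rPr$. Then $\psi|_{P_0}$ is faithful, and $\psi\ox\phi$ restricts to the faithful state $\psi_0\ox\phi_0$ on $P_0\bar\otimes M_0=(r\ox p)(P\bar\otimes M)(r\ox p)$. The assumption $\supp(\chi)\le r\ox p$ means that $\chi$ is determined by its compression $\chi_0:=\chi|_{P_0\bar\otimes M_0}$, via $\chi(y)=\chi((r\ox p)y(r\ox p))$. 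The same holds on the smaller algebra with $r\ox q$, because $\supp(\chi|_{P\bar\otimes N})\le r\ox q$; this follows from $p\le q$ and $\chi(1-r\ox q)\le\chi(1-r\ox p)=0$.

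The first step is to identify both relative entropies with their faithful-corner versions. Relative entropy is unchanged under compression to a projection dominating both supports; one can see this from Araki's definition or from the spatial-derivative formula in \cite{ohya2004quantum}. Hence
\[
D(\chi\|\psi\ox\phi)=D(\chi_0\|\psi_0\ox\phi_0),\qquad D(\chi|_{P\bar\otimes N}\|\psi\ox\phi|_N)=D(\chi'_0\|\psi_0\ox\phi|_{N_0}),
\]
where $\chi'_0$ is the compression of $\chi|_{P\bar\otimes N}$ to $P_0\bar\otimes N_0$. Next, check that $\chi'_0=\chi_0\circ(\id_{P_0}\ox\iota_0)$. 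For $a\in P_0\bar\otimes N_0$ we have $\chi(a)=\chi((r\ox p)a(r\ox p))$, which is exactly the argument used in the text to show that $\iota_0$ preserves the state. So the left side of \eqref{eq:abstract-approx-rec} equals
\[
D(\chi_0\|\psi_0\ox\phi_0)-D\bigl(\chi_0\circ(\id\ox\iota_0)\,\big\|\,(\psi_0\ox\phi_0)\circ(\id\ox\iota_0)\bigr).
\]
This is a data-processing defect for the unital normal channel $\id_{P_0}\ox\iota_0$ between faithful states.

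Now apply \cite{FaulknerHollands}. Their theorem is stated for inclusions, or more generally for channels; I expect the main obstacle to lie here. $\iota_0$ is a unital completely positive map and not an inclusion, so one must either use a channel version of the theorem or realize $\iota_0$ through a Stinespring-type inclusion. If only the inclusion version is available, I would first try to replace $N_0$ by its image: the map $a\mapsto pap$ is a homomorphism on $qNq$ only when $p$ commutes with $N_0$, which fails in general, so a channel form is likely needed. The resulting bound reads
\[
\text{defect}\ \ge\ \tfrac14\,\bigl\|\chi_0-\chi'_0\circ(\id_{P_0}\ox\alpha_0)\bigr\|^2 .
\]
Here the modified Petz map of $\id\ox\iota_0$ with respect to $\psi_0\ox\phi_0$ is $\id_{P_0}\ox\alpha_0$, because the rotated Petz maps factorize on product reference states. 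Finiteness of $D$, and with it the constant $\tfrac14$ from Pinsker, transfers directly from the hypothesis.

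Finally, lift back to the full algebra. For $y\in P\bar\otimes M$ we have $\chi(y)=\chi_0((r\ox p)y(r\ox p))$. On the other side,
\[
\chi|_{P\bar\otimes N}\circ(\id\ox\alpha)(y)=\chi\bigl((\id\ox\alpha_0)((r\ox p)y(r\ox p))\bigr)=\chi'_0\circ(\id\ox\alpha_0)\bigl((r\ox p)y(r\ox p)\bigr).
\]
The first equality uses $\alpha=\alpha_0(p\,\cdot\,p)$ together with $\chi(y)=\chi(ryr)$ in the $P$-slot. Thus both functionals are pullbacks of $\chi_0$ and $\chi'_0\circ(\id\ox\alpha_0)$ under the contractive compression $y\mapsto(r\ox p)y(r\ox p)$. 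Conversely, each functional on the corner is the restriction of the pulled-back one. Hence the norms of the differences coincide, and \eqref{eq:abstract-approx-rec} follows.
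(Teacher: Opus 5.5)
Your proposal is correct and follows the paper's proof: you cut down to the faithful corners, apply the Faulkner--Hollands recovery bound \cite{FaulknerHollands} to the unital map $\id_{P_0}\otimes\iota_0$ with modified Petz dual $\id_{P_0}\otimes\alpha_0$, and lift back using invariance of relative entropy under passing to the support corner of the second state. The obstacle you anticipate does not arise, since that theorem is already stated for unital normal (2-)positive maps between faithful states. The factorization of the modified Petz map holds specifically because the first tensor factor is the identity, whose rotated Petz duals are all the identity, so the average over rotations passes through the tensor product; factorization of the rotated maps alone would not give this.
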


\begin{proof}
    Let $e=\supp(\psi) \in P$ and set $P_0 := ePe$.
    Note that $\psi|_{P_0}$ is a faithful state.
    Consider the state-preserving normal unital completely positive map
    \begin{equation}
        \id_{P_0} \ox \iota_0  : (P_0\bar\ox N_0, \psi|_{P_0}\ox \phi|_{N_0}) \to (P_0\bar\ox M_0, \psi|_{P_0}\ox \phi|_{M_0}).
    \end{equation}
    It is clear from the definition of the Petz dual map \cite{ohya2004quantum} that the dual map of a tensor product map relative to faithful product states is just the tensor product of the Petz dual maps. 
    The same is true for the rotated Petz dual map, but it is false in general for the modified Petz dual map, which is defined as an average of rotated Petz dual maps.
    In our case, however, one of the maps is the identity map, which is its own rotated Petz dual.
    Therefore, the modified Petz dual map \cite[Eq.~(19)]{FaulknerHollands} of $\id_{P_0}\ox\iota_0$ is the state-preserving normal unital completely positive map
    \begin{equation}
         \id_{P_0} \ox \alpha_0 : (P_0\bar\ox M_0, \psi|_{P_0}\ox \phi|_{M_0}) \to (P_0\bar\ox N_0, \psi|_{P_0}\ox \phi|_{N_0}).
    \end{equation}
    Thus, the approximate recovery theorem \cite[Thm.~2, eq.~(22)]{FaulknerHollands} asserts
    \begin{multline}
        D(\chi|_{P_0\bar\ox M_0} \,\|\, \psi|_{P_0} \ox \phi|_{M_0}) - D(\chi|_{P_0\bar\otimes N_0} \,\|\, \psi|_{P_0} \ox \phi|_{N_0}) \\
         \ge
        \frac14 \| \chi|_{P_0\bar\ox M_0} - \chi|_{P_0\bar\otimes N_0} \circ (\id_{P_0}\ox\alpha_0) \|^2.\label{eq:cutdown-approx-rec}
    \end{multline}
    Since the quantum relative entropy $D(\placeholder\|\placeholder)$ is invariant under passing to the support corner of the second state, the left-hand side of eq.~\eqref{eq:cutdown-approx-rec} equals
    $
            D(\chi \,\|\, \psi \ox \phi) - D(\chi|_{P\bar\ox N} \,\|\, \psi \ox \phi|_{N}), 
    $
    while the right-hand side equals
    $
        \frac14 \| \chi - \chi|_{P\bar\otimes N} \circ (\id_{P}\ox\alpha) \|^2.
    $
    This concludes the proof.
\end{proof}

Eq.~\eqref{eq:abstract-approx-rec} can be sharpened by using the approximate recovery theorem in terms of the measured relative entropy, see \cite{hollands_trace-_2023}.

\printbibliography

\end{document}